\newtheoremstyle{mytheorem}
{3pt}
{1pt}
{\itshape}
{-0pt}
{\bf}
{}
{1em}
{}
\newtheoremstyle{mydefinition}
{3pt}
{6pt}
{\itshape}
{-0pt}
{\bf}
{}
{1em}
{}
\newtheoremstyle{myremark}
{3pt}
{1pt}
{\rm}
{-0pt}
{\bf}
{}
{1em}
{}
\theoremstyle{mytheorem}
\newtheorem{Theorem}{Theorem}[section]
\newtheorem{Proposition}[Theorem]{Proposition}
\theoremstyle{mydefinition}
\newtheorem{Definition}[Theorem]{Definition}
\newtheorem{Remark}[Theorem]{Remark}
\newtheorem{Algorithm}[Theorem]{Algorithm}
\def\R{\mathbb R}
\def\N{\mathbb N}
\def\K{\mathbb K}
\def\to{\rightarrow}
\begin{document}

\title{Mobility decisions, economic dynamics and epidemic}

 \author[G. Fabbri]{Giorgio Fabbri$^a$}
 \thanks{$^{b}$Univ. Grenoble Alpes, CNRS, INRAE, Grenoble INP, GAEL, 38000 Grenoble, France.}
 \address{G. Fabbri, Univ. Grenoble Alpes, CNRS, INRA, Grenoble INP, GAEL - CS 40700 - 38058 Grenoble CEDEX 9, France. The work of Giorgio Fabbri is partially supported by the French National Research Agency in the framework of the ``Investissements d'Avenir'' program (ANR-15-IDEX-02) and of the center of excellence LABEX MME-DII (ANR-11-LABX-0023-01).}
 \email{giorgio.fabbri@univ-grenoble-alpes.fr}.

 \author[S. Federico]{Salvatore Federico$^b$}
 \thanks{$^{b}$Universit\'a degli Studi di Genova, Dipartimento di Economia.}
 \address{S. Federico, Universit\`a degli Studi di Genova, Dipartimento di Economia. Via Vivaldi, 5, Darsena, 16126, Italy.}
 \email{salvatore.federico@unige.it}

 \author[D. Fiaschi]{Davide Fiaschi$^c$}
 \thanks{$^{c}$Universit\'a degli Studi di Pisa, Dipartimento di Economia e Management.}
 \address{D. Fiaschi, Universit\`a degli Studi di Pisa, Dipartimento di Economia e Management. Via Ridolfi 10, 56124 Pisa (PI), Italy.}
 \email{davide.fiaschi@unipi.it }

 \author[F. Gozzi]{Fausto Gozzi$^d$}
 \thanks{$^{d}$Dipartimento di Economia e Finanza, LUISS \emph{Guido Carli}, Roma.}
 \address{F. Gozzi, Dipartimento di Economia e Finanza, Libera Universit\'a degli Studi Sociali Guido Carli, Roma.} \email{fgozzi@luiss.it}

\maketitle

\begin{abstract}
{We propose a model, which nests a susceptible-infected-recovered-deceased (SIRD) epidemic model into a dynamic macroeconomic equilibrium framework with agents' mobility. The latter affect both their income and their probability of infecting and being infected. Strategic complementarities among individual mobility choices drive the evolution of aggregate economic activity, while infection externalities caused by individual mobility affect disease diffusion. The continuum of rational forward-looking agents coordinates on the Nash equilibrium of a discrete time, finite-state, infinite-horizon Mean Field Game.}

\smallskip

% \noindent The model allows to evaluate alternative scenarios of mobility restrictions, especially policies dependent on the state of epidemic.

% \smallskip

\noindent {We prove the existence of an equilibrium and provide a recursive construction method for the search of an equilibrium(a), which also guides our numerical investigations.}

\smallskip

\noindent  {We calibrate the model by using Italian experience on COVID-19 epidemic and we discuss policy implications.}

% We discuss how our Economic SIRD (ESIRD) model produces a substantially different dynamics of the economy and epidemic with respect to a SIRD model with constant agents' mobility. Finally, by numerical explorations, we illustrate how the model can be used to design an efficient policy of state-of-epidemic-dependent mobility restrictions, which mitigates the epidemic peaks stressing health system, and allows to trade-off the economic losses due to the reduced mobility, with the lower death rate, due to the lower spread of epidemic.
\bigskip

\noindent \textit{Keywords}: {mean field game, strategic complementarities, ESIRD, COVID-19.} %, mitigation policies, pandemic possibilities frontier.

\noindent \textit{JEL Classification}: E1, H0, I1, C72, C73, C62.
\end{abstract}

\newpage

\section{Introduction}

We propose an integrated assessment model, denoted by ESIRD, encompassing a susceptible-infected-recovered-deceased (SIRD) epidemic model and a dynamic macroeconomic equilibrium economic model, where \textit{mobility choices} of forward-looking agents affect both income (and consumption) and the spread of epidemic. A calibrated version of the model illustrates the possibilities to use the model to design an efficient policy of state-of-epidemic-dependent mobility restrictions.

\medskip

Pandemic crisis has shown that sudden drops in individual mobility have a substantial negative consequence on aggregated income and consumption \citep{organizaccao2020evaluating}. The decrease of individual mobility along the COVID-19 crisis has been the joint outcome of individual decisions, caused by the diffusion of infection, and of containment measures imposed by national authorities (lockdown, curfew, etc.). In turn, a reduction in individual mobility brings down individual income \citep{huang2020quantifying} as well as epidemic dynamics, being higher individual mobility associated to a higher probability of infecting and being infected \citep{nouvellet2020report}. Therefore, entangled externalities and ``equilibrium'' effects are at work; more precisely, individual mobility decisions display i) \textit{strategic complementarities} with mobility choice of other agents, because the marginal impact on individual income of individual mobility is increasing in the  mobility \citep{bulow1985multimarket,cooper1988coordinating}; and, ii)\textit{ negative externalities} on contagion dynamics, because of agents in their mobility choices internalize the risk of being infected, but not the effect of infecting other people \citep{bethune2020covid}.\footnote{Another possible source of externality, the healthcare congestion, is analyzed by \cite{jones2020optimal}.} 
%So far, within the recent dynamic micro-funded epidemiological-economic literature (see, e.g., \citealp{eichenbaum2020macroeconomics, toxvaerd2020equilibrium, jones2020optimal}) no contribution has characterized in a dynamic macroeconomic equilibrium model the optimal mobility choices in presence of strategic complementarities in production and negative externalities on contagion dynamics, the first step in a more adequate evaluation of policies reducing individual mobility for contrasting COVID-19 epidemic.

\medskip

In the model we focus on \textit{short-term} mobility. 
% It should be interpreted as the daily/weekly activities of mobility observed in labor market, i.e., commuting, movements by car, track, bus, train, etc., generally observed in the economy. This mobility also includes movements for the activity of consumption, both for purchasing goods, for services and for leisure.
Epidemic dynamics is driven by a generalized version of the SIRD model, where the average number of contacts per person per time is endogenous, as well as the transition rate (i.e., the flow of new infected), and depends on the mobility choices of agents.

Agents maximize an inter-temporal discrete time utility function considering consumption and mobility costs. Their choice of mobility for work (respectively for consumption) depends on their state (susceptible, infectious, or recovered), the aggregate level of economic activity, the current and future policies on mobility restrictions, and on their future utility, which, in turn, depends on the probabilities of being infected in the future and on the future economic dynamics. At each time, aggregate economic activity (consumption) depends on the state of the epidemic and on the individual mobility choices.

We set the agent's problem as a game with a continuum of players in a finite state space (the four states of agents) and, in particular, the model can be seen as a discrete time, finite state, infinite horizon Mean Field Game (MFG) \citep{LasryLions07}. The notion of equilibrium used in the paper is basically borrowed -- even if re-elaborated -- from \citealp{jovanovic1988anonymous} (Definition \ref{def:eq}), which we show to be equivalent to the more common notion of Nash equilibrium of our Mean Field Game (Proposition \ref{prop:equivalence}). We then provide the proof of the existence of the equilibrium for our Mean Field Game (Theorem \ref{th:existence}), and finally propose a recursive algorithm to identify and then numerically simulate such equilibrium (Section \ref{sec:alg} and Theorem \ref{thm:verif}).

\medskip

MFG literature deals with the behavior of Nash equilibria in differential games as the number of agents becomes large.
There is extensive recent research activity on MFGs starting from the pioneering works of Huang, Malhamé and Caines \citep{HuangMalhameCaines06} and, independently, at the same time by Lasry and Lions \citep{LasryLions06I, LasryLions06II, LasryLions07}.
In the large population limit, one expects to obtain a game with a continuum of agents where, like in our case, the effects on the decision of any agent from the actions of the other agents are experienced through the statistical distribution of states.
Since perturbations from the strategy of an agent does not influence the statistical states' distribution, the latter acts as a parameter in each agent's control problem.

\medskip

We calibrate the model by using Italian experience on COVID-19 epidemic in the period February 2020-May 2021. Numerical explorations under different configurations of state-of-epidemic-dependent mobility restrictions highlight the presence of a trade-off between economic losses and fatalities due to the pandemic, that is, of a pandemic possibilities frontier as in \cite{kaplan2020great} and \cite{acemoglu2020multi}. However, we argue that policy evaluation should take into account two additional directions. The first is related to the share of susceptible at the end of the period of evaluation, which can favor a new outbreak of epidemic in the future without an efficient vaccine. The second is the social feasibility of prolonged mobility restrictions \citep{vollmer2020report}.

\medskip

Our paper makes four main contributions to literature.
The first is to the epidemiological-macroeconomic literature, which has recently boomed following the COVID-19 outbreak. Its main goal is to produce integrated assessment models, where the economic dynamics complements epidemiological models. In particular, a strand of literature focuses on optimal policy problem from a planner's perspective without modeling individual behavior (see, e.g., \citealp{alvarez2020simple,piguillem2020optimal, moser2020pandemic, atkeson2020will}), while another one considers forward-looking agents and market determination of good and factor prices, as in \cite{eichenbaum2020macroeconomics}, \cite{toxvaerd2020equilibrium}, \cite{jones2020optimal} and \cite{kaplan2020great}.
With respect to these contributions, we explicitly consider agents' (short-term) mobility. There are several good reasons for this focus: (i) in the epidemiological literature, mobility is (not surprisingly) identified  as the key variable in containing an epidemic \citep{nouvellet2020report}; (ii) mobility is an easily measurable variable and many datasets are freely available; % (e.g. GSM mobility data, Google Mobility Trends, and Apple Mobility Lab); % therefore, it is ideal for bringing the model to the data; 
and, (iii) since mobility was/is the primary focus of several restrictive policies imposed by governments, the proposed framework is a natural candidate to evaluate past and future policies on mobility restrictions.
As already argued, focusing on mobility implies taking into account non-market interactions among individual choices: the presence of strategic complementarities in individual decisions is another element of novelty in our epidemiological-macroeconomic model. This introduces substantial difficulties in the mathematical study of the model, which arise e.g. from proving the existence of a Nash equilibrium.

An advantage of our analysis is to consider individuals with a long (infinite) time horizon. This is crucial for understanding the interaction between the change in death risk (whose effects should be evaluated over years), and the epidemic dynamics (whose effects should be measured over days). For example, in a two (or three)-period model (as for instance \citealp{bandyopadhyay2021learning} or \citealp{bhattacharya2021rational}), a strategy to reduce mobility (and consumption) in the short run so that to decrease the death risk and wait for the end of the epidemics cannot be correctly evaluated. Similar situation applies for the non-linear dynamics of the epidemics and bringing the model to empirical data, which would also be problematic.

\medskip

The second contribution is on methodology.
We have discussed above that our model belongs to the class of discounted infinite horizon, discrete time, finite state space MFG. To the best of our knowledge, this does not fall into the classes already studied in the literature, among which \cite{GomesMohrSousa12}, \cite{DoncelEtAl19}, \cite{HadikanlooSilva19}, and \cite{BonnansEtAl21,Wiecek19}. \cite{HadikanlooSilva19} and \cite{BonnansEtAl21} consider only finite horizon problems, while \cite{GomesMohrSousa12} (and similarly \citealp{Wiecek19}) consider infinite horizon problems of ergodic type or with entropy penalization, where the dependence of the agents' utility from the choices of the other agents is more regular than in our model. %, and this allows them to prove existence and uniqueness of equilibrium.
\cite{DoncelEtAl19} consider an infinite horizon MFG, but where agents' cost does not depend on the strategies of the other agents, which instead happens in our model for the presence of strategic complementarities. Hence, our theorems of existence of an equilibrium and the recursive construction of an equilibrium are to be considered a novelty.

\medskip

We also contribute to the literature focusing on the endogenous determination of the infection rate and the reproduction rate of an epidemic \citep{avery2020economist}. Infection rate depends on many aggregate factors (climate, geography, health system, etc.), but also crucially revolves around individual choices. Several approaches have been proposed to endogenize infection rates, among which a purely epidemiological approach as \cite{fenichel2013economic}, and a behavioral approach as, for example, in \cite{engle2020behavioral} and \cite{bisin2021spatial}. \cite{farboodi2020internal}, \cite{toxvaerd2020equilibrium}, and \cite{eichenbaum2020macroeconomics} are instead more in line with our approach, developing settings where forward-looking individuals chose their actions facing an epidemic-economic trade-off. However, no paper directly models mobility choices of individuals while considering strategic complementarities and negative externalities in an infinite horizon equilibrium setting as a way to explain the dynamics of infection rate during the pandemic. The advantage of our approach is evident in the interpretation of results, allowing to directly correlate mobility and infection rates, and in the possibility to bring the model to data.

\medskip

The final contribution is to the literature on the effect of epidemics diffusion on mobility (see, e.g., \citealp{goolsbee2021fear} and \citealp{meloni2011modeling} and \citealp{nouvellet2020report} for an epidemiological perspective). % In particular, \cite{engle2020staying} and \cite{simonov2020persuasive} look at the effects of governmental suggestions and Fox News messages respectively; while \cite{goolsbee2021fear} and \cite{mendolia2021determinants} try to disentangle the effects of individual decisions (auto-confinement) and governmental restrictive policies on mobility during the COVID-19 pandemic.
% Our model provides a perfect candidate framework to analyze the observed mobility behavior since in a dynamic equilibrium setting it characterizes individual decisions both in a policy-free regime (self-confinement) or, alternatively, in a regime where restrictive measures on mobility are introduced. More importantly, 
Our contribution provides a theoretical framework to evaluate restrictive policies going beyond the simple trade-off economic losses/fatalities as prospected in \cite{kaplan2020great}, \cite{acemoglu2020multi}, and \cite{gollier2020cost}. It makes it possible, for instance, to take into account other key dimensions regarding the social feasibility of policies in the evaluation, such as the fragility of post-lockdown situations with a high risk of new outbreaks, and the sustainability of health systems (see, in particular, Sections \ref{sec:calibrationValidationModel} and \ref{sec:questionModel}). 

% It also allows to give an answer to the provocative question posed, among others, by \cite{cochrane2020sir} on the viability of a containment policy based only on self-confinement of individuals free of any governmental restrictions on mobility. As far as the Italian experience in 2020 is concerned, our model suggests that a policy only based on self-confinement would have resulted in a peak prevalence of nearly six million infected people (see Section \ref{sec:questionModel}), which corresponds to a need of about four hundred thousand of beds in hospitals. This would have been unsustainable for a country with about 190,000 beds in hospitals as of February 2020, most of which already occupied by patients with COVID-19-independent pathologies.

\medskip

The paper is organized as follows: Section \ref{sec:model} presents the model, Section \ref{sec:singleagent} focuses on the agent's optimization problem while Section \ref{sec:equilibrium} provides a recursive construction method for the search of an equilibrium(a).
Section \ref{sec:calibrationValidationModel} calibrates the model to Italian data;  Section \ref{sec:questionModel} uses the model to investigate the effects of policies aiming at mitigating epidemics and their effects on economic activity; Section \ref{sec:concludingRemarks} concludes.

\section{The epidemiologic-economic dynamic model}
\label{sec:model}

We consider an infinite horizon discrete time ($t=0,1,2...$) world with a continuum set of agents, whose individual mass is equal to zero so that the actions of a single agent do not modify the evolution of the global epidemic state and of the aggregate economic variables.
%This is, roughly speaking, the typical setting that is used in the Mean Field Games theory, as discussed in the Introduction.

As in the classical SIRD framework (\citealp{chowell2016mathematical}), at each time, the \textit{health status} $k$ of an agent can be: susceptible ($k=S$); infected ($k=I$); recovered ($k=R$); and deceased $(k=D)$.
We then denote the set of possible health status by $\mathbb{K}$, i.e.
\[
\mathbb{K}:=\left\{S,I,R,D\right\}.
\]
We denote by $\mu(t,k)$ the share of the population in the health status $k$ at time $t$ and by $\bm{\mu}(t)$ the four-dimensional vector $\bm{\mu}(t) = (\mu(t,S),\mu(t,I),\mu(t,R),\mu(t,D))$ representing the \textit{health status distribution of the population}.\footnote{The sum of the components of $\bm{\mu}(t)$ is always equal to $1$; hence, $\bm{\mu}(t)$ is can be seen as a probability measure on $\mathbb{K}$. {Roughly speaking, this means that our state of the world is described up to sets of agents of measure zero.}}

\subsection{The agents' utility}
\label{sub:utility}
Each agent chooses at each time $t$ their mobility rates (whose maximal value is w.l.o.g. normalized to $1$) for production, $\vartheta_p(t)$, and for consumption, $\vartheta_c(t)$.

The instantaneous utility at time $t$ of the agent in the health state $k(t)\in\K$, undertaking the actions $\bm{\vartheta}(t):=(\vartheta_p(t),\vartheta_c(t))\in[0,1]^{2}$ is equal to $0$ if $k(t)=D$, otherwise,
\[
u(t,c(t),k(t),\bm{\vartheta}(t)) :=
\ln c(t)
- \gamma_p\left(t,k(t),\bm{\mu}(t)\right) \vartheta_{p}(t) - \gamma_c\left(t,k(t),\bm{\mu}(t)\right) \vartheta_{c}(t)- M.
\]
In the above expression, $c(t)$ is the individual consumption, $M\in\R$ is the (exogenous) constant utility of state deceased, which ``normalizes the utility of nonsurvival to zero'' \citep[p. 2]{rosen1988value}, and  $\gamma_p\left(t,k(t),\bm{\mu}(t)\right)$ and $\gamma_c\left(t,k(t),\bm{\mu}(t)\right)$ are, respectively, the marginal utility cost to move into the labour market (and in general for the movements related to the productive activities of the agent) and to move into the consumption market (or, in general, for the movements related to the individual consumption).

{The cost of moving must be understood in a broad sense, taking into account several elements: the cost in the narrow sense,  e.g., for gasoline, but also the difficulties in the movement related to the status "infected", i.e., direct virus-related impediments; moreover, the stress generated by the likelihood of coming into contact with the virus, i.e., the ``psychological'' costs of bearing the risk to become infected; and, finally, the penalties related to legal constraints, i.e., government-imposed restrictions to movements.}

{In particular,} the functions $\gamma_p$ and $\gamma_c$ will be used to model public policies for mobility restriction. For this reason, they may depend explicitly on time $t$ (in the case of policies that intervene at exogenously fixed times) or on the state of the epidemic (for example, in the case of policies that change endogenously depending on the severity of the epidemiological situation). The mobility cost structure is known by agents who will incorporate, in their inter-temporal choices, future policy changes (both exogenous and endogenous). 

We make the following assumptions on the marginal utility cost of mobility:
\[
\gamma_{p}(t,R,\bm\mu)\leq \gamma_{p}(t,S,\bm\mu)\leq \gamma_{p}(t,I,\bm\mu) \text{ and }  \gamma_{c}(t,R,\bm\mu)\leq \gamma_{c}(t,S,\bm\mu)\leq \gamma_{c}(t,I,\bm\mu),
\]
for any $\bm\mu$ and $t$.
{This ranking naturally follows from the broad view underlining the mobility costs in our model. While we expect that the costs in the narrow sense associated with mobility do not depend on the agent's status, the ``psychological'' costs associated to the possibility of coming into contact with the virus are greater in susceptibles than in recovered agents (first inequality). On the other hand, both the possibly massive physical impediments associated with an ongoing infection, and the greater legal penalties (quarantine), suggest that the marginal cost for the infected agents is greater than for the other agents (second inequality).}

\medskip

As described in detail in Subsections \ref{sub:mobilityandepidemics}, at each time any susceptible individual has a certain probability of becoming infected, and each infected individual has a certain probability of dying and of recovering; hence, the evolution of the individual health status $k(t)$ is represented by a discrete stochastic process. The goal of each agent will be to maximize their total expected inter-temporal utility given by:
\begin{equation}
\label{eq:intertemporalUtility-pre}
\mathbb{E} \left [ \sum_{t=0}^{\infty} {(1-\rho)}^{t} u(t,c(t),k(t),\bm{\vartheta}(t)) \right],
\end{equation}
where $(1-\rho)\in(0,1)$ is the exogenous discount factor.

\subsection{Consumption and mobility\label{sub:consumption}}

We suppose that the opportunity to move into the consumption market produces a benefit for agents. Moving can indeed allow access to a greater number of goods and services and to a wider variety, satisfying more precisely the needs of the agent or finding equivalent goods with inferior prices. Alternatively, we can suppose that the effective consumption is affected by the mobility/time dedicated to the consumption activity \citep{steedman2001consumption}.
{To formalize this idea as simply as possible, we suppose that the ``generalized cost'' faced by the agent for one unit of consumption good depends on its (consumption-related) mobility choice $\vartheta_{c}(t)$, in particular:
\[
G(\vartheta_{c}(t)) = \frac{1}{P_{0} + P_{1} \vartheta_c{(t)}},
\]
where
\[
{P_{0}, P_{1}\geq 0}
\]
are exogenous constants. $P_{0}$ reflects the presence of some baseline/no-movement cost, while $P_{1}$ measures the the decline in the cost due to movement. As will become clear below, the specific form chosen for $G$ responds to a particularly simple and transparent specification of the model.}

{In the model, we do not consider the saving (and therefore the dynamics of accumulation of capital) and therefore we impose, at every time, that the individual income is entirely destined to the expenditure for consumption, i.e.
\[
 y(t) = G(\vartheta_{c}{(t)}) c(t),
\]
where we denoted by $y(t)$ the individual income at time $t$.}
This implies that:
\begin{equation}
\label{eq:consumptionRate}
c(t) =  \left(P_0 + P_1 \vartheta_{c}(t) \right) y(t),
\end{equation}
i.e. the consumption is decided by the level of income, but also by the mobility for consumption.

\subsection{Income and mobility}
\label{sub:incomemobility}

As for the need of mobility for consumption, we assume that mobility affects agent's income, in particular a greater mobility positively contributes to income. The idea here is intuitive: some jobs/activities require the presence of the worker and can, or cannot, only be partially carried out by remote work. We also suppose that income is affected by health conditions of agents (obviously, sick people are less productive than healthy ones), and that productivity, and therefore, agent's income also depends on the macroeconomic conditions, so that a greater macroeconomic activity, \textit{ceteris paribus}, will lead to higher agent income. In the model, $Z(t)$ will denote the level of macroeconomic activity at time $t$ and its dependence on agents' choices will be discussed shortly. All in all, we suppose that the individual income has the following form:
\begin{equation}
y(t) =Z(t)\left(A_{0}^k + A_{1}^k \vartheta_p(t)\right),
\label{eq:agentIncome}
\end{equation}
{where $A_0^k$ and $A_1^k$ are positive exogenous constants, depending on the health status $k$ of the agent. This specification represents in the simplest (bilinear) way the complementarity between the individual contribution $\left(A_{0}^k + A_{1}^k \vartheta_p(t)\right)$ and the aggregate economic activity level $Z(t)$. In Eq. (\ref{eq:agentIncome}) the individual contribution has a linear specification that captures the mechanism described without increasing excessively the formal complexity of the model.}

We will suppose that $A_1^S= A_1^R$ so we will denote this value by $A_1^{SR}$, and we will suppose that
\[
0< A_0^{I}\leq A_0^{SR} \quad \text{ and } \quad 0\leq A_1^I\leq A_{1}^{SR},
\]
where the second inequalities reflect the fact that healthy (susceptible or recovered) agents are more productive than infected.
% \begin{equation}
% A_{0}(k)= \left\{\begin{array}{ll}
% a_0^{SR} & \mbox{if} \ k\in\{S, R\} \\
% a_0^{I}& \mbox{if} \  k=I
% \end{array}\right.
% \ \ \ \ \
% A_{1}(k)=\left\{\begin{array}{ll}
% a_1^{SR} &\mbox{if} \  k\in\{S, R\} \\
% a_1^{I} & \mbox{if} \  k=I
% \end{array}\right.
% \end{equation}

From (\ref{eq:consumptionRate}), the consumption of the agent in the health state $k$, when the epidemic is in the state $\bm\mu(t)$ and they undertake the production-consumption choices $\bm\vartheta(t)=(\vartheta_{c}(t),\vartheta_{c}(t))$, is then given by
\begin{equation}
\label{eq:consumtption}
c(t) = Z(t) \left( A_{0}^k + A_{1}^k \vartheta_p(t)\right) \left(P_{0} + P_{1} \vartheta_c(t) \right).
% \nonumber
\end{equation}

The level of macroeconomic economic activity $Z(t)$ depends on the choices of all agents on their mobility for the participation in the productive activities, and thus it presents strategic complementarities. More precisely, we will suppose that it has the following shape:
\begin{eqnarray}\label{zzz}
Z(t)  := \phi\bigg( \mu(t,S)\bar\vartheta_p(t,S), \,\,\mu(t,I)\bar\vartheta_p(t,I), \,\, \mu(t,R)\bar\vartheta_p(t,R)\bigg),
\end{eqnarray}
where $\phi:[0,1]^{3}\to (0,\infty)$ is non-decreasing in all the components and such that $\phi(0,0,0)>0$ and $\bar\vartheta_p(t,S)$ (respectively $\bar\vartheta_p(t,I)$ and $\bar\vartheta_p(t,R)$) is the average productive-mobility choice of susceptible (respectively infected, recovered) agents. In the following, we will focus on symmetric equilibria where all individuals of the same health status behave in the same way; hence, along the equilibrium, $\bar\vartheta_p(t,S)$, $\bar\vartheta_p(t,I)$ and $\bar\vartheta_p(t,R)$ will also be the (optimal) choice of any single agent.

\subsection{Agents mobility and epidemic dynamics}
\label{sub:mobilityandepidemics}
We model the evolution of the size of health classes, that is, the shares of population with different health status, following a standard SIRD model \emph{without vital dynamics} (newborns are not considered and people die only because of the virus) adjusted for the mobility choices of the agents.

To make the point clearer, we recall that in the standard SIRD model the number of new infected agents is given by
\begin{equation}
\beta\frac{I(t) S(t)}{N(t)},
\label{eq:betaSIRD}
\end{equation}
where $I(t)$ (respectively $S(t)$, $N(t)$) is the number of infected agents (respectively susceptible agents, total number of agents) at time $t$ and $\beta$ is an exogenous factor representing the average number of contacts %\footnote{Contact here means an iteration that results in contagion if it occurs between a susceptible and an infected individual.}
per agent per time.

In the standard SIRD model, $\beta$ is constant and is exogenous with respect to the state of epidemics and agents' choices. The idea behind this formulation is that people meet by chance independently of their epidemiological status; hence, the probability of a susceptible agent meeting an infected agent and getting infected at time $t$ is
\[
\beta\frac{I(t)}{N(t)} = \beta \mu(t,I).
\]
As a result, in the standard SIRD model, the share of the \textit{new} infected individuals at time $t$ is
\[
\beta\frac{I(t)}{N(t)} \frac{S(t)}{N(t)} = \beta \mu(t,I)\mu(t,S).
\]
Based on the idea that the number of contacts depends on the mobility of agents, we enrich this formulation adjusting the parameter $\beta$ for the agents' mobility choices.
In particular, we observe that it is natural to suppose that the number of contacts is proportional to the distance covered by agents; for example, an agent walking 200 meters in a street would meet twice as many other agents than if they walked 100 meters. For the same reason, the number of contacts is proportional to the average distance covered by other agents given the mobility of the agent.

Therefore, since the maximal mobility is normalized to one and distinguishing the mobility for production and for consumption, the probability of a susceptible agent with mobility $(\vartheta_p(t),\vartheta_p(t))$ meeting an infected agent and getting infected is modeled as
\begin{equation}
\label{eq:taudef}
\tau(t)=\left (\beta_{p}\bar\vartheta_{p}(t,I)\vartheta_p(t) +\beta_{c}\bar\vartheta_{c}(t,I)\vartheta_p(t) \right )\mu(t,I),
\end{equation}
where $\beta_p, \beta_c>0$ are given constants that we assume to satisfy the condition $\beta_p+\beta_c<1$.

Taking the average over the population of susceptibles, and multiplying by the portion of susceptibles among the population, we find the share of the new infected agents; the latter represent the (negative) variation in the share of the susceptible population, that is\footnote{The assumptions of zero mortality for reasons different from the virus and of the zero natality are implicit in \eqref{eq:dynamicsMassSusceptibles}.}
\begin{equation}
\mu(t+1,S) = \mu(t,S) - \beta(t)\mu(t,S)\mu(t,I),
\label{eq:dynamicsMassSusceptibles}
\end{equation}
where
\begin{equation}
\beta(t) :=\beta_{p}\bar\vartheta_{p}(t,I){\bar\vartheta}_p(t,S) +\beta_{c}\bar\vartheta_{c}(t,I){\bar\vartheta}_c(t,S).
\label{eq:infectionRateMobility}
\end{equation}
Therefore, in our ESIRD (economic SIRD) model, $\beta(t)$ of \eqref{eq:infectionRateMobility} is the counterpart of $\beta$ in the SIRD model of \eqref{eq:betaSIRD}.

Apart for the role of mobility in $\beta(t)$, we will stick to the classic structure of the SIRD model, and we suppose that $\pi_D$ (respectively $\pi_R$) is the probability of an infected agent to die (respectively to recover) at each time. At the aggregate level, this means that a portion $\pi_D$ (respectively $\pi_R$) of infected agents die (respectively recover) at each time. Hence, the evolution of the health status distribution of population in our model is as follows:
\begin{equation}
\label{eq:evolu}
\bm{\mu}(t+1)=\bm{Q}(t)\bm{\mu}(t),
\end{equation}
where
$$
\bm{Q}(t):=\left(
\begin{array}{cccc}
1- \beta(t)\mu(t,I) & 0 & 0 & 0 \\
\beta(t)\mu(t,I)  & 1- \pi_R -\pi_D  & 0 & 0 \\
0 & \pi_R & 1 & 0  \\
0 & \pi_D & 0 & 1 \\
\end{array}
\right).
$$

From \eqref{eq:infectionRateMobility} we observe that the dependence of $\beta(t)$ on agents' mobility is proportional to the product of individual mobilities, which generates \textit{strategic complementarities} in the mobility choices\footnote{{In the model the individual mass is zero, hence individual decisions do not affect the dynamics of the aggregate variables. Here, we refer to \emph{mobility choices} in the sense of the average mobility choices. This fact will be further stressed in the next section.}} with aggregate negative effects. In particular, infected agents do not internalize the effect of their mobility choice on the infection rate of susceptible agents, and both susceptible and infected agents do not internalize the effect of the increased spread of the pandemic on the level of macroeconomic activity $Z(t)$. Therefore, in the \textit{decentralized} equilibrium, the agents’ mobility is too high with respect to the optimal \textit{social} mobility.

\section{The agent's optimization problem \label{sec:singleagent}}

We now look at the optimization problem of a single agent. As previously discussed, the zero-mass agent assumption implies that the individual choices of any specific agent do not modify the macro variables and, in particular, the evolution of the epidemic according to \eqref{eq:evolu}. The latter only depends on the average choices of each group defined by agents' health status. This means that agents take the average strategies $\bar{\bm{\vartheta}}(t)$ and the dynamics of $\bm{\mu}(t)$ as given when they make their decisions, that is, we are considering a \textit{Mean Field Game} \citep{LasryLions07}. At the equilibrium, we will impose that optimal individual decisions coincide with the average decisions of the corresponding group defined by agents' health status.

The epidemic dynamics $\bm{\mu}(t)$ does not depend on the choices of the single agent; however, the evolution of their epidemic status does. In particular, as we have already discussed in Section \ref{sec:model}, the probability of a susceptible agent getting infected is given by the \textit{endogenous} probability $\tau(t)$ defined in \eqref{eq:taudef}, while the probabilities of an infected agent dying and recovering are \textit{exogenous} and equal to $\pi_D$ and $\pi_R$, respectively. Hence, the state of the agent $k(t)$ is represented by a controlled \textit{Markov Chains}, whose \textit{transition kernel} at each time $t$ is given by:
\[
 \bm{q}(t) =
\left(
\begin{array}{cccc}
p_{SS}(t) & p_{IS}(t) & p_{RR}(t) & p_{DS}(t)  \\
p_{SI}(t)  & p_{II}(t)  & p_{RI}(t)  & p_{DI}(t) \\
p_{SR}(t)  & p_{IR}(t)  & p_{RR}(t)  & p_{D}(t)  \\
p_{SD}(t)  & p_{ID}(t) & p_{RD}(t)  & p_{DD}(t)  \\
\end{array}
\right)
=
\left(
\begin{array}{cccc}
1- \tau(t)& 0 & 0 & 0 \\
\tau(t) & 1- \pi_R -\pi_D  & 0 & 0 \\
0 & \pi_R & 1 & 0  \\
0 & \pi_D & 0 & 1 \\
\end{array}
\right),
\]
where $p_{k_1 k_2}(t)$ is the probability to switch from the status $k_1$ at time $t$ to the status $k_2$ at time $t+1$. Even if not emphasized in the notation, $\bm{q}$ depends on the individual decisions $\bm{\vartheta}(t)$ and on the average decisions of other agents $\bar {\bm \vartheta}(t)$.

Since we rely on dynamic programming, we let the initial time and state vary. Hence, we assume that the agent starts at time $t_{0}\in\N$ in the state $k(t_0)\in\K$, where $(t_{0},k(t_0))\in\N\times \K$, and that they choose their strategies in the set:
\[
\mathcal{A}(t_{0}):=
\Big\{\bm{\vartheta}=({\vartheta}_p,{\vartheta}_c): \{t_0,t_{0}+1,...\}\times \mathbb{K} \to [0,1]^2 \ \mbox{s.t.}\ \ {\vartheta}(\cdot,D)=(0,0)\Big\}.
\]
In general, the set of admissible strategies depends on the time $t_0$ and we should denote the set of strategies by $\mathcal{A}(t_0)$. At each time, the strategy $\bm\vartheta$ can be chosen from all pairs $(\vartheta_p, \vartheta_c) \in [0,1]^2$, so with a slight abuse of notation (making abstraction of the translation, for which the strategy at time $t_0$ is defined only for $t\geq t_0$) we will denote $\mathcal{A}$ as the set of admissible strategies. In the set of strategies, each agent includes a complete plan of action for: i) the initial health states different from the actual one of the same agent; and, ii) all possible future health statuses, even though some of these are not attainable; for example, recovered agents cannot become susceptible or infected in the future.

%We notice that if the parameters $\bm\gamma$ did not depend on time, we could replace the set of controls $\mathcal{A}$ with a set of feedback controls.

The counterpart of the target (\ref{eq:intertemporalUtility-pre}) starting from $(t_{0},k(t_0))$ depending on the initial health status distribution $\bm{\mu}(t_0)$ and on the average strategies $\bar{\bm{\vartheta}}(t,k) $ specified for all $t\geq t_0$ and $k \in \mathbb{K}$ is
\[
J(t_{0},k(t_0),\bm{\mu}(t_0),\bar{\bm\vartheta}(\cdot,\cdot);\bm{\vartheta}(\cdot,\cdot)) := \mathbb{E} \left [ \sum_{t=t_{0}}^{\infty} {(1-\rho)}^{t-t_{0}} u(t,c(t),k(t), \bm\vartheta(t,k(t))) \right],
\]
where $c(t)$ is just an abbreviation.\footnote{In particular, from (\ref{eq:consumtption}), $c(t)= Z(t) \left( A_{0}^k + A_{1}^k \vartheta_p(t)\right) \left(P_{0} + P_{1} \vartheta_c(t) \right)$ and $Z(t)$ is given by (\ref{zzz}). Hence, $c(t)$ does depend on $\bm\bar{\bm\vartheta}$ and $\bm{\mu}$.}

The value function of the agent is defined as
\begin{equation}
\nonumber
V(t_{0},k(t_0),\bm{\mu}(t_0),\bar{\bm\vartheta}(\cdot,\cdot)):=\sup_{\bm{\vartheta}(\cdot,\cdot)\in\mathcal{A}} J(t_{0},k(t_0),\bm{\mu}(t_0),\bar{\bm\vartheta}(\cdot,\cdot);\bm{\vartheta}(\cdot,\cdot)).
\end{equation}
According to the \textit{dynamic programming principle}, the value function is a solution (possibly not unique) to the \textit{Bellman equation} (with unknown $v$)
\begin{align}\label{DPE}
v(t_{0},k(t_0)) = \sup_{\bm\vartheta\in [0,1]^2}
\sum_{k\in\K} p_{k(t_0)k}(t_{0})\big[ u(t_{0}, c(t_0), k(t_0),\bm\vartheta)
+ (1-\rho)v(t_{0}+1,k) \big].
\end{align}

\section{The limits of our modelling strategies}

In our model formulation, we adopt some shortcuts that need more detailed discussion.
The positive relationship between utility and individual mobility is to be considered a reduced form of the result of solving the equilibrium of an economy populated by firms producing heterogeneous goods and services in different locations and by consumers with heterogeneous preferences incurring moving costs in their search for the best consumption basket. In equilibrium, the reduced mobility should determine higher prices as the result of lower competition among firms; additionally, the same quantity of consumption should also lead to a lower utility for the possible mismatch between the consumers' heterogeneous preference and the specific local supply of goods and services.
A complementary explanation of the positive effect of mobility on individual utility is that reduced mobility constrains the capacity of expenditure of individuals, which turns out as forced saving. In our framework, where saving is absent, the reduced mobility, therefore, corresponds to an increasing gap between income and consumption, that is, between the latter and utility.
Also, the relationship between mobility and individual income is to be taken as a reduced form of the equilibrium of an economy, where the place of residence and place of work differ (i.e., there exists commuting); where the production activity needs some mobility, for example, the need to transport commodities among different plants; and where the place of production and the place of sale differ, which is the most common case. As a result, in equilibrium, reduced mobility leads to a decrease in economic activity. 
Overall, considering all these phenomena would add considerable complexity to our analysis, but no significant insight given our focus on short-run dynamics.

\section{Equilibrium: existence and recursive construction \label{sec:equilibrium}}

In this section, first we provide the definition of an interteporal equilibrium for our economy, which poses particular hidden difficulties (see Section \ref{sub:defequilibrium}) and then provide a theorem of the existence of an equilibrium. Finally, we discuss a recursive construction of equilibrium (see Section \ref{sec:alg}), which is the basis for our numerical simulations.

\subsection{The definition and existence of equilibrium\label{sub:defequilibrium}}

First, we give the definition of a symmetric Nash equilibrium for our Mean Field Game. Let $\mathcal{P}(\K)$ be the set of probability distributions on $\K$, that is $\bm\mu(t) \in \mathcal{P}(\K)$ for every $t \geq  0$.

\begin{Definition}[Symmetric Nash equilibrium of the Mean Field Game]\label{nash}
	Let $\bm\mu (0)\in\mathcal{P}(\K)$ be the health status distribution of a population at $t=0$. A Nash equilibrium for the Mean Field Game is a strategy $\bar{\bm\vartheta}(\cdot,\cdot)\in\mathcal{A}$ such that,
	\begin{equation}\label{nashnash}
	V(0,k(0), \bm{\mu}(0) ,\bar{\bm\vartheta}(\cdot,\cdot)) = J(0,k(0), \bm{\mu}(0),\bar{\bm\vartheta}(\cdot,\cdot);\bar{\bm\vartheta}(\cdot,\cdot)) \qquad  \forall k  \in \K.
	\end{equation}
\end{Definition}
Definition \ref{nash} states that, at equilibrium, the optimal mobility choice of an agent, when the average mobility choice of the other agents is $\bar{\bm\vartheta}(\cdot,\cdot)$, is exactly $\bar{\bm\vartheta}$, that is, the equilibrium is \textit{symmetric} for all agents belonging to the same health status. Focusing on symmetric Nash equilibria among all possible Nash equilibria is very common in the Mean Field literature (see, e.g., \citealp[Sec. 6.1.1.]{carmona}).
	
From another perspective, our Mean Field Game can be viewed as an ''anonymous sequential game with a continuum of players, in which agent players affect their opponents in ways that are insignificant at the individual level but significant when aggregated, and in which factors that are stochastic at the individual level become deterministic when aggregated'' \citep{jovanovic1988anonymous}. In particular, the following notion of equilibrium can be formulated:
\begin{Definition}[Equilibrium of the anonymous sequential game]\label{def:eq}
	An equilibrium starting from $\bm{\mu}(0) \in\mathcal{P}(\K) $ is a couple $(v(\cdot,\cdot),\hat{\bm\vartheta}(\cdot,\cdot))$, with $v:\N\times \K\to\R$ and $\hat{\bm\vartheta}(\cdot,\cdot)\in\mathcal{A}$, such that, along the trajectory of the health status distribution starting at $\bm{\mu}(0)$ as a result of the average strategy $\hat{\bm\vartheta}(\cdot,\cdot)$, one has that:
	\begin{itemize}
		\item[(i)] $v$ is bounded and satisfies\footnote{The trajectory of health status distribution starting at $\bm{\mu}(0)$ enters into \eqref{DPE} by the sequence of $p_{k(t_0)k}$, in turn depending on $\tau(t_0)$ of \eqref{eq:taudef}, that is, the probabilities to change individual health status $p_{k(t_0)k}$ depend on the share of infected agents on population $\mu_I$.} the Bellman equation (\ref{DPE}) for every $(t_{0},k(t_0))\in\N\times \K$;
		\item[(ii)]  $\hat{\bm\vartheta}(t_{0},k(t_0))$ is an optimizer of the right hand side of \eqref{DPE} for every $(t_{0},k(t_0))\in\N\times \K$ when $\bar{\bm\vartheta}(t_{0},k(t_0))=\hat{\bm\vartheta}(t_{0},k(t_0))$.
		\end{itemize}
\end{Definition}

According to Definition \ref{def:eq}, the notion of equilibrium requires that for each $t_0>0$, each agent optimizes their objective functional given its health status $k({t_0})$ and the health status distribution $\bm\mu(t_0)$ (Point (ii) in Definition \ref{def:eq}) and that such optimization is sequentially consistent; that is,  $k(t_{0}+1)$ and $\bm\mu(t_{
0}+1)$ are the outcome of the optimizing behavior at time $t_{0}$; then, $k(t_{0}+2)$ and $\bm\mu(t_{0}+2)$ are the outcome of the optimizing behavior at time $t_{0}+1$; etc. (Point (i) in Definition \ref{def:eq}).

The importance of Definition \ref{def:eq} of equilibrium will be clarified further in Section \ref{sec:alg}, where we will deal with the recursive construction of the equilibrium, the basis of our numerical investigation of the properties of equilibrium.
Notably, the use of Definition \ref{def:eq} in the rest of the analysis is legitimated by its equivalence with Definition \ref{nash}, as proven in Proposition \ref{prop:equivalence}.
\begin{Proposition}\label{prop:equivalence}
	Definitions \ref{nash} and \ref{def:eq} are equivalent.
\end{Proposition}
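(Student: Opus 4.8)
The plan is to \emph{freeze the aggregate behaviour} and thereby reduce both notions to the single‑agent control problem of Section \ref{sec:singleagent}, for which the classical dynamic programming machinery applies. The crucial observation is that in both Definition \ref{nash} and Definition \ref{def:eq} the map $\overline\Theta^{\mu_0}$ plays two roles at once: it is the exogenous aggregate behaviour that, through \eqref{eq:evolu}, fixes $\overline\mu(\cdot)$ — and hence fixes the per‑period utility $u^{\mu_0,\overline\Theta^{\mu_0}}$ and the transition kernels $\mathcal P^\vartheta(t)$ entering \eqref{DPE} and \eqref{eq:DPE2} — and it is also the candidate feedback strategy of the representative agent. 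Once the first role is used to \emph{fix} the data, the equivalence becomes a two‑sided verification statement for the resulting discounted, infinite‑horizon, finite‑state Markov decision problem.

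First I would record the preliminaries, with $\Theta=\overline\Theta^{\mu_0}$ fixed. The per‑period utility $u^{\mu_0,\overline\Theta^{\mu_0}}$ is uniformly bounded: by \eqref{zzz} one has $\varepsilon\le Z(t)\le \phi(1,1,1)<\infty$, and the remaining terms of \eqref{eq:instantaneousUtilityFunction} are bounded because the mobility costs are evaluated on the compact simplex $\mathcal P(\K)$ and $\vartheta\in[0,1]^2$. Consequently the value function $V^{\mu_0,\overline\Theta^{\mu_0}}$ is finite and bounded (by $\|u\|_\infty/\rho$), the operator on the right‑hand side of \eqref{DPE} is a $(1-\rho)$‑contraction on the bounded functions on $\N\times\K$, and therefore: (a) $V^{\mu_0,\overline\Theta^{\mu_0}}$ is the \emph{unique} bounded solution of the Bellman equation \eqref{DPE} — which, when $\Theta=\overline\Theta^{\mu_0}$, is exactly \eqref{eq:DPE2}; and (b) a feedback strategy $\theta\in\mathcal A_{\text{ag}}(t_o)$ is optimal from \emph{every} pair $(t_o,k_o)$ if and only if $\theta(t_o,k_o)$ attains the supremum in \eqref{DPE} evaluated at $v=V^{\mu_0,\overline\Theta^{\mu_0}}$, for every $(t_o,k_o)$. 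Here I would invoke the standard verification/uniqueness theorem for discounted MDPs; no measurable‑selection subtlety arises, since $\K$ is finite and the aggregate state carries no randomness.

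The two implications then follow almost immediately. For Definition \ref{def:eq} $\Rightarrow$ Definition \ref{nash}: given an equilibrium $(v^{\mu_0},\overline\Theta^{\mu_0})$, fact (a) and item (i) yield $v^{\mu_0}=V^{\mu_0,\overline\Theta^{\mu_0}}$; item (ii) says $\overline\Theta^{\mu_0}(t_o,k_o)$ maximises the right‑hand side of \eqref{eq:DPE2} $=$ \eqref{DPE} for all $(t_o,k_o)$, so by fact (b) the feedback strategy $\overline\theta:=\overline\Theta^{\mu_0}$ is optimal, i.e.\ \eqref{nashnash} holds. For Definition \ref{nash} $\Rightarrow$ Definition \ref{def:eq}: given a Nash equilibrium $\overline\Theta^{\mu_0}$, set $v^{\mu_0}:=V^{\mu_0,\overline\Theta^{\mu_0}}$; it is bounded and, by the Dynamic Programming Principle, solves \eqref{DPE} $=$ \eqref{eq:DPE2}, which is item (i); the equality \eqref{nashnash} says $\overline\theta=\overline\Theta^{\mu_0}$ is an optimal feedback strategy from every $(t_o,k_o)$, so by the ``only if'' part of fact (b) the action $\overline\Theta^{\mu_0}(t_o,k_o)$ maximises the right‑hand side of \eqref{eq:DPE2}, which is item (ii).

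I expect the only genuine point of care — rather than a real obstacle — to be the clean statement of fact (b) in \emph{both} directions: the ``if'' direction is the usual verification theorem, while the ``only if'' direction uses the DPP applied to $\overline\theta$ together with the recursion this strategy satisfies and the identity $J^{\mu_0,\overline\Theta^{\mu_0}}(t_o+1,k;\overline\theta)=V^{\mu_0,\overline\Theta^{\mu_0}}(t_o+1,k)$, which is again \eqref{nashnash} at $(t_o+1,k)$. Should one read the equality in Definition \ref{nash} as required only at $t_o=0$, one first propagates it to all $(t_o,k_o)$ by the usual tail/sub‑optimality argument — an optimal feedback strategy is optimal also from every state it can reach, and here $\overline\mu$ evolves deterministically — before running the above. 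A minor housekeeping step is to note that the boundedness hypothesis on $v$ in Definition \ref{def:eq}(i) is precisely what singles out $V^{\mu_0,\overline\Theta^{\mu_0}}$ among the possibly non‑unique solutions of the Bellman equation flagged just before \eqref{DPE}.
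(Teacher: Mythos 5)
Your proposal is correct and follows essentially the same route as the paper's own proof: part (i)$\Rightarrow$Nash is the standard verification argument (boundedness of $v^{\mu_0}$ identifying it with the value function and the pointwise maximizer property giving optimality of the feedback), and the converse is the Dynamic Programming Principle applied to $V^{\mu_0,\overline{\Theta}^{\mu_0}}$ together with \eqref{nashnash}. You merely make explicit the supporting facts (boundedness of $u$, the $(1-\rho)$-contraction yielding uniqueness of the bounded solution of \eqref{eq:DPE2}, and the two-sided verification statement) that the paper invokes as ``standard.''
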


\begin{proof}
(a)
Let $\bm\mu(0)\in\mathcal{P}(\K)$, let  $(v,\hat{\bm{\vartheta}})$ be an equilibrium in the sense of Definition \ref{def:eq}, and let $k(0)\in  \K$.  %given $\mu_0\in\mathcal{P}(\K)$, $\bar\vartheta=\overline{\bar\vartheta}^{\bm{\mu}(t_0)}$, in the definition of \eqref{eq:intertemporalUtility},
%Then,  define
%$$\overline\theta(t,k):=\overline{\bar\vartheta}^{\bm{\mu}(t_0)}(t,k),  \ \ \ \ (t,k)\in\{t_{0},t_{0}+1,...\}\times \K,$$
%the latter is an optimal control for the single agent's problem, i.e.
%$$J^{\mu_0,\overline\bar\vartheta^{\bm{\mu}(t_0)}}(t_{0},k_0;\overline{\theta})=V^{\mu_0,\overline{\bar\vartheta}^{\bm{\mu}(t_0)}}(t_{0},K(t_0)),$$
%and $V^{\mu_0,\overline{\bar\vartheta}^{\bm{\mu}(t_0)}}(t_{0},K(t_0))=v^{\bm{\mu}(t_0)}(t_{0},K(t_0))$.
By standard verification arguments in optimal control, it is clear that, since $v$ is bounded, it coincides with the  value function (of the agent) and that the control  $\hat{\bm\vartheta}\in \mathcal{A}$
is optimal (for the agent)  when $\overline{\bm\vartheta}= \hat{\bm\vartheta}$. Hence, \eqref{nashnash} is verified showing that
$\hat{\bm\vartheta}$ is a Nash equilibrium in the sense of Definition \ref{nash}.

(b)
Let $\bm\mu(0)\in\mathcal{P}(\K)$ and let $ \hat{\bm\vartheta}$ be a Nash equilibrium in the sense of Definition \ref{nash}. Set, for each $t_0\geq 0$,
$v(t_0,k(t_0)):=V(t_{0},k(t_0), \bm\mu(t_0),\overline{\bm\vartheta})$ with $\overline{\bm\vartheta}= \hat{\bm\vartheta}$ and consider the couple $(v,\hat{\bm\vartheta})$. By the dynamic programming principle, $v(t_0,k(t_0))$ satisfies \eqref{DPE} at each $t_{0}\geq 0$, so part (i) of Definition \ref{def:eq} is satisfied. Part (ii) of the same definition is satisfied by \eqref{nashnash}.
\end{proof}	
Proposition \ref{prop:equivalence} states that Definitions \ref{nash} and \ref{def:eq} identify the same equilibria, i.e. when our Mean Field Game is viewed as an anonymous sequential game, its equilibrium is a Nash equilibrium and vice versa.

We conclude the section with a result of existence of an equilibrium given in Theorem \ref{th:existence}.
\begin{Theorem}\label{th:existence}
	Given the Definition \ref{def:eq} of the equilibrium of our Mean Field Game, such equilibrium exists for each $\bm\mu(0)\in\mathcal{P}(\K)$.
\end{Theorem}
\begin{proof}
See Appendix \ref{app:proofs}.
\end{proof}
\begin{Remark}
The proof of existence is based on the Tikhonov's fixed point Theorem (see Theorem \ref{th:t} in Appendix \ref{app:proofs}), which however does not guarantee the \textit{uniqueness} of equilibrium. {In MFG theory, the usual condition to guarantee uniqueness are monotonicity conditions on the costs promoting  a disaggregation dynamics (see, e.g., \cite{Cardaliaguet2020}). Even if our model does not fall in the standard families of MFG theory, one can expect that the aggregation push due to the term $Z(t)$ leads to possible multiple equilibria.}  
\end{Remark}

\subsection{The recursive construction of the equilibrium\label{sec:alg}}

In Algorithm \ref{algo:compEquilibrium}, we illustrate a recursive algorithm, inspired by Definition \ref{def:eq}, which allows to compute an equilibrium of our Mean Field Game. The importance of Algorithm \ref{algo:compEquilibrium} is shown by Theorem \ref{thm:verif}, which states the conditions for the computed equilibrium to be both a Nash equilibrium and an anonymous sequential game equilibrium, that is, to satisfy Definitions \ref{nash} and \ref{def:eq}.

\begin{Algorithm}[The algorithm for the computation of an equilibrium] \label{algo:compEquilibrium}.
\begin{enumerate}[1.]
\item At time $0$, set $\hat{\bm\mu}(0)=\bm{\mu}(0)$, $\hat v(0,D)=0$, and arbitrarily assign $\hat{v}(0,k)$ for $k \in \{S,I,R\}$.\\

\item At time $t\geq 0$, given $\hat{\bm\mu}(t)$ and $\hat{v}(t,\cdot)$, according to the corresponding optimization in the Bellman equation (cf. \eqref{eq:DPERbis}-\eqref{eq:DPEI}), we set, for  $k\in\{R,I\}$\footnote{Hereafter, given $a,b\in\R$, we denote $a\vee b=\max\{a,b\}$, $a\wedge b=\min\{a,b\}$.},
\begin{equation}\label{thetaRI}
	\hat {\bm\vartheta}(t,k):=\left(\left(\left(\frac{1}{\gamma_p(t,k,\hat{\bm\mu}(t))}-\frac{A^k_{0}}{A^k_{1}}\right)\vee 0\right)\wedge 1, \ \left(\left(
	\frac{1}{\gamma_c(t,k,\hat{\bm\mu}(t))}-\frac{P_{0}}{P_{1}}\right)\vee 0\right)\wedge 1\right).
	\end{equation}
\item Then, to perform the optimization in the Bellman equation for $k=S$ (cf. \eqref{eq:DPES}), we set
	\begin{equation}\nonumber
	\hat a(t):=\hat{\mu}(t,I)\hat {\vartheta}_{p}(t,I),\ \ \ \  \hat b(t):=\hat{\mu}(t,I) \hat {\vartheta}_{c}(t,I).
	\end{equation}
	and, fixing the difference $\xi:= v(t_{0}+1,S)-v(t_{0}+1,I)$ as a parameter, we set
	 $$
 \hat{\bm\vartheta}^{\xi}(t,S)=(\hat{\vartheta}^{\xi}_{p}(t,S),\hat{\vartheta}^{\xi}_{c}(t,S)),
	$$
	where
	\begin{eqnarray*} \nonumber
		\vartheta_p^{\xi}(t,S)=
		\frac{1}{\gamma_p(t,S,\hat{\bm\mu}(t))+{(1-\rho)}  \hat a(t) \xi}
		-\frac{A^S_{0}}{A^S_{1}}, \\		\vartheta^{\xi}_c(t,S)=
		\frac{1}{\gamma_c(t,S,\hat{\bm\mu}(t))+{(1-\rho)}  \hat b(t)\xi}
		-\frac{P_{0}}{P_{1}}.
   \end{eqnarray*}
Then, \eqref{eq:DPES} can be rewritten in terms of $\xi$ leading to the algebraic equation 	
\begin{equation}\label{eq:HJBS}
	\hat v(t,S) =(1-\rho) \hat v^{\xi}(t+1,I)+(1-\rho) \xi + f(t,\xi),
	\end{equation}
	where:
	\begin{equation}\label{def:f}
		f(t,\xi)= u(t,\hat c^{\xi}(t,S),S,\hat{\bm\vartheta}^{\xi}(t,S))
			- (1-\rho)\left(\beta_{p}\hat a(t) \hat{\vartheta}_p^{\xi}(t,S) +\beta_{c}\hat b(t) \hat{\vartheta}_c^{\xi}(t,S)\right) \xi,
	\end{equation}
\item Given the parametric value $\xi:= v(t_{0}+1,S)-v(t_{0}+1,I)$, we set the value of the corresponding variables:
	\begin{equation} \nonumber
	\begin{array}{l}
	\hat {Z}^{\xi}(t) = \phi\bigg( \hat{\mu}(t,S)\hat\vartheta^{\xi}_p(t,S), \,\,\hat{\mu}(t,I)\hat \vartheta_{p}(t,I), \,\, \hat{\mu}(t,R)\hat\vartheta_{p}(t,R)\bigg); \medskip \\
	\hat c^{\xi}(t,k) = \hat Z^\xi(t) \left( A_{0}^k + A_{1}^k \hat \vartheta_p(t,k)\right) \left(P_{0} + P_{1} \hat\vartheta_c(t,k) \right), \ \ \ \text{for } k=R,I; \medskip \\
	\hat c^{\xi}(t,S) = \hat Z^\xi(t) \left( A_{0}^k + A_{1}^k \hat \vartheta^{\xi}_p(t,S)\right) \left(P_{0} + P_{1}\hat \vartheta^\xi_c(t,S) \right) ; \medskip\\
	\hat v^{\xi}(t+1,R)=\dfrac{1}{1-\rho}\left(\hat{v}(t,R)  -	u(t,\hat c^{\xi}(t,R),R,\hat{\vartheta}(t,R)\right); \medskip\\
	\hat v^{\xi}(t+1,I)=\dfrac{1}{1-\pi_R-\pi_D}\left[\dfrac{\hat v(t,I)-
	u(t,\hat c^{\xi}(t,I),I,\hat{\vartheta}(t,I))}{1-\rho}- \pi_R\, \hat v^{\xi}(t+1,R) \right];\medskip\\
	\hat v^{\xi}(t+1,S)=\xi+ \hat v^{\xi}(t+1,I).	
	\end{array}
	\end{equation}
\item Assuming that \eqref{eq:HJBS} admits a unique solution $\hat\xi$, we set
\begin{equation}\label{thetaS}
		{\hat{\bm\vartheta}}(t,S)= {\hat{\bm\vartheta}}^{\hat\xi}(t,S),
\end{equation}
and the values of the variables at time $t+1$ as
	\begin{equation}\label{def:hatv}
	\begin{cases}
	\hat v(t+1,R)= \hat v^{\hat\xi}(t+1,R),\\
	  \hat v(t+1,I)= \hat v^{\hat\xi}(t+1,I),\\
	   \hat v(t+1,S)= \hat v^{\hat\xi}(t+1,S),
	   \\ \hat v(t+1,D)= 0,
	   \end{cases}
\end{equation}
	and
	$$
	\hat{\bm{\mu}}(t+1)=\hat {\bm{Q}}(t)\hat{\bm{\mu}}(t),
	$$
	where
	$$
	\bm{\hat Q}(t):=\left(
	\begin{array}{cccc}
	1- \hat \beta(t)\hat \mu(t,I) & 0 & 0 & 0 \\
	\hat \beta(t)\hat \mu(t,I)  & 1- \pi_R -\pi_D  & 0 & 0 \\
	0 & \pi_R & 1 & 0  \\
	0 & \pi_D & 0 & 1 \\
	\end{array}
	\right),
	$$
	where
	$$
	\hat \beta(t) :=\beta_{p}\hat\vartheta_{p}(t,I){\hat\vartheta}_p(t,S) +\beta_{c}\hat\vartheta_{c}(t,I){\hat\vartheta}_c(t,S).
	$$
	
	\item We repeat steps 2-4 with the updated $\hat{\bm\mu}(t+1)$ and $\hat{v}(t+1,\cdot)$.	
\end{enumerate}
\end{Algorithm}

\begin{Theorem}\label{thm:verif}
	Let $\bm{\mu}(0)$ is the initial health status distribution and let $\hat{v}(0,\cdot)$ be assigned with $\hat v(0,D)=0$. Consider Algorithm \ref{algo:compEquilibrium} and assume that $\hat\xi$ is well defined for every $t\in\N$ and that $\hat{v}$ is bounded. Then the couple $(\hat{v},\hat{\bm\vartheta})$ is an equilibrium starting at $\bm{\mu}(0)$ according to Definition \ref{def:eq}.
\end{Theorem}
\begin{proof}
See Appendix \ref{app:proofs}.
\end{proof}

The logic behind the use of Algorithm \ref{algo:compEquilibrium} together with Theorem \ref{thm:verif} is that the search for the equilibrium of our Mean Field Game can be traced back to the search for the initial value  $v(0,\cdot)$ such that the implied dynamics of $v(t,\cdot)$, starting from the initial health status distribution $\bm{\mu}(0)$, is consistent with the optimal conditions and $v(t,\cdot)$ is both non negative (we have normalized $v(t,D)=0$ for each $t$ by an appropriate choice of $M$) and bounded from above.
%Since we are not sure that the equilibrium is unique, the set of initial values of the value functions $v(0,\cdot)$ identified by Algorithm \ref{algo:compEquilibrium} together with Theorem \ref{thm:verif} could have cardinality greater than one.

\section{Calibration of the model\label{sec:calibrationValidationModel}}

In the calibration of the model, we focus on the recent Italian experience with COVID-19. Italy was unfortunately the first Western country severely hit by COVID-19; the epidemic shock was sudden and unexpected as well as the deep impact on Italian mobility and production (see Figure \ref{fig:weeklymobilityforworkplacesvseconomicactivity} below). At the same time, Italy was also the first Western country to adopt strict restrictions in mobility in March 2020. Overall, this makes the Italian case particularly well-adapted to calibrate/estimate the relationship between mobility, production and dynamics of epidemic.\footnote{Data and codes are available at \url{https://people.unipi.it/davide_fiaschi/ricerca/}.}

The first step in the numerical calibration of the model is to specify the $Z(t)$ in (\ref{eq:agentIncome}). To minimize the number of model's parameters, we consider the following one-parameter specification:
\begin{equation}
Z(t) \equiv  1 - \exp\left( - g \left[ \bar\vartheta_p(t,S) \mu(t,S) + \bar\vartheta_p(t,I) \mu(t,I) + \bar\vartheta_p(t,R) \mu(t,R) \right] \right),
\label{eq:functionZ}
\end{equation}
where $g$ measures the sensitivity of individual income to aggregate mobility, i.e. the complementarities between individual and aggregate mobility in determining the level of individual income. In this respect, we expect that $g$ is greater than 0. Taking (\ref{eq:functionZ}) into account, overall we have to set 19 parameters, which are listed in Table \ref{tab:listModelParameters}. Below, we provide more details on the method used to set their values.

\begin{table}[htbp]
	\scriptsize{
		\begin{tabular}{p{2.cm}p{5cm}p{3cm}p{7.cm}}
			\hline
			\hline
			Parameter & Meaning & Value & Method used to set the value \\
			\hline
			$\pi_R$ & Daily probability of recovering when infected & 0.07143 & Taken from literature on COVID-19 \citep{voinsky2020effects}\\
			$\pi_D$ & Daily probability of death when infected & 0.00052 & Taken from literature on COVID-19 \citep{flaxman2020estimating}\\
			$\beta_p$ & The impact of mobility for production on infection & 0.14902 & Calculated based on an $R_0$ equal to 2.9 for Italy (\url{https://en.wikipedia.org/wiki/Basic\_reproduction\_number}) and on the fact that mobility of infected is on average 30\% less as a result of prevalent rate of symptoms of COVID-19 in infected people \citep{day2020covid}\\
			$\beta_c$ & The impact of mobility for consumption on infection & 0.14606 & Calculated based on an $R_0$ equal to 2.9 for Italy (\url{https://en.wikipedia.org/wiki/Basic\_reproduction\_number}) and on the fact that mobility of infected is on average 30\% less as a result of prevalent rate of symptoms of COVID-19 in infected people \citep{day2020covid} \\
			$\rho$ & Discount rate of utilities & 0.000296 & Taken from \cite{Laibsonetal18} \\
			$\gamma_p(S)$, $\gamma_p(I)$, and $\gamma_p(R)$ & Cost of mobility for production for different types of agents in baseline scenario & 0.29795, 0.42564, and 0.29795 & Calibrated in order to have mobility and production equal to 1 in a free-epidemic economy for susceptibles and recovered and mobility equal to 0.7 for infected\\
			$\gamma_c(S)$, $\gamma_c(I)$, and $\gamma_c(R)$ & Cost of mobility for consumption for different types of agents in baseline scenario & 0.21375, 0.22840, and 0.21375 & Calibrated in order to have mobility and production equal to 1 in a free-epidemic economy for susceptibles and recovered and mobility equal to 0.7 for infected\\
			$A_0^{SR}$ and $A_0^{I}$ & Sensibility of individual income to aggregate mobility independent from individual mobility & 0.70229 and 0.49160 & For susceptible and recovered estimated from the relation between mobility and production in Italy in the period February 2020 - May 2021 (see Figure \ref{fig:weeklymobilityforworkplacesvseconomicactivity}). For infected people calibrated at 70\% of other agents based on the prevalence of symptoms.\\
			$A_1^{SR}$ and $A_1^{I}$ & Sensibility of individual income to individual mobility & 0.29805 and 0.29805 & Estimated from the relation between mobility and production in Italy in the period February 2020 - May 2021 setting mobility and production equal to 1 in a pre-epidemic economy (see Figure \ref{fig:weeklymobilityforworkplacesvseconomicactivity})\\
			$P_0$ and $P_1$ & Sensibility of individual consumption to individual mobility & 0.47187 and 0.12828 & Estimated from the relation between average propensity to consume and mobility for retail and recreation in Italy in the period February 2020 - May 2021\\
			$g$ & Sensibility of individual income to aggregate mobility & $ 7.741615 $ & Estimated from the relation betweem mobility and production in Italy in the period February 2020 - May 2021 (see Figure \ref{fig:weeklymobilityforworkplacesvseconomicactivity}) \\
			$M$ & Utility to be deceased & -1.30 & Calibrated to avoid negative lifetime utility for each survival agent\\
			$\mu(0,S)$, $\mu(0,I)$, and $\mu(0,R)$ & Initial state of epidemic & $1 -1/60.000.000$,
			$1/60.000.000$, and 0 & Calibrate on an economy of 60 million agents as in Italy in 2020\\
			\hline
			\hline
		\end{tabular}
	}
	\caption{List of model's parameters, their values and notes on how they are calculated/calibrated/estimated.}
	\label{tab:listModelParameters}
\end{table}

\subsection{Calibration of the epidemiological parameters}

The calibration of the epidemiological parameters focuses on daily dynamics as standard in epidemiology \citep{ferguson2020report}. Several studies provide basic information on COVID-19 main epidemiological characteristics. In particular, \cite{voinsky2020effects} report that the average number of days for recovering from COVID-19 is 14, which implies $\pi_R =0.07142$. \cite{flaxman2020estimating}, instead, document an overall probability to die once infected of 0.94\% in Italy and an average number of days from infection to death of 18, which implies $\pi_D=0.00052$.

Finally, for setting $\beta_p$ and $\beta_c$ we assume that they are equal, so that observed infection rate is the product between $\beta_p$ ($\beta_c$) and the average mobility of infected agents once mobility of susceptible is normalized to one in an economy without infected, that is, $\bar{\bm \vartheta}(0,S)=(1,1)$ (see System of (\ref{eq:evolu})). \cite{day2020covid} report that the prevalence rate of symptoms of COVID-19 in infected people is about 30\%, i.e. 70\% of infected people are asymptomatic. Assuming that the latter maintain the same mobility, we set average mobility of an infected agent 30\% less than the one of a susceptible, that is, $\bar{\bm \vartheta}(0,I)=(0.7,0.7)$. Since the observed infection rate at time 0 can be expressed as $\beta(0) = \left(\pi_{D} + \pi_{R}\right)R_0$, then $\beta(0) = \beta_{p}\bar{\vartheta}_p(0,S)\bar{\vartheta}_p(0,I) + \beta_{c}\bar{\vartheta}_c(0,S)\bar{\vartheta}_c(0,I) =\left(\pi_{D} + \pi_{R}\right)R_0$, therefore $2 \beta_{p}\bar{\vartheta}_p(0,I) =\left(\pi_{D} + \pi_{R}\right)R_0$, and, finally, $\beta_p=\beta_c = (1/1.4)\left(\pi_{D} + \pi_{R}\right)R_0=0.14902$, given a basic reproduction rate $R_0$ of COVID-19 equal to 2.9 for Italy.\footnote{\url{https://en.wikipedia.org/wiki/Basic\_reproduction\_number}.}

\subsection{Calibration of the economic part}

The calibration of parameters governing the relationship between income and mobility are based on the Italian experience in the period February 15, 2020 - May 31, 2021 reported in Figure \ref{fig:weeklymobilityforworkplacesvseconomicactivity}.

\begin{figure}[htbp]
	\centering
	\includegraphics[width=0.9\linewidth]{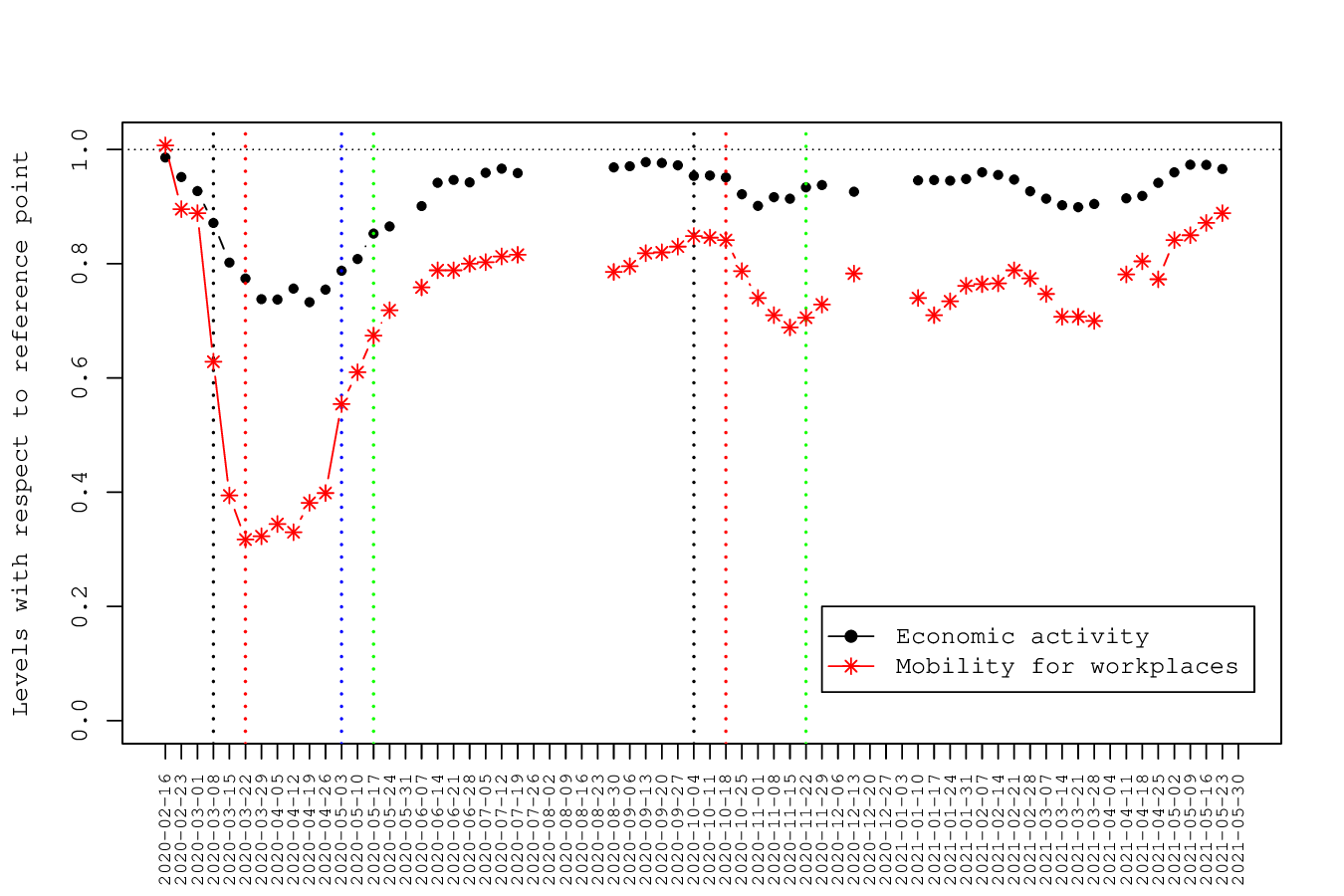}
	\caption{The relationship between weakly mobility for workplace and weakly economic activity in the period February 15, 2020 - May 31, 2021 (Italian holiday weeks are not reported). Dashed lines indicate weeks of new imposed mobility restrictions at national level (March 9, 2020, March 22, 2020, October 8, 2020 and October 24, 2020) and of a relaxation in mobility restrictions (May 4, 2020, May 18, 2020, and November 24, 2020). \textit{Source}:  Google Mobility Trend (\url{https://www.google.com/covid19/mobility/}) and OECD Weekly Tracker of GDP growth (\url{https://www.oecd.org/economy/weekly-tracker-of-gdp-growth/}) }
	\label{fig:weeklymobilityforworkplacesvseconomicactivity}
\end{figure}

Italian economic activity as estimated by OECD Weekly Tracker of GDP growth\footnote{\url{https://www.oecd.org/economy/weekly-tracker-of-gdp-growth/}.} appears very correlated with mobility for workplaces as reported by the Google Mobility Trend. \footnote{\url{https://www.google.com/covid19/mobility/}).} The strong drop in mobility in the period between February 23, 2020 and March 8, 2020 (almost - 10\%) well before the first introduction of mobility restrictions at national level in the week of March 8, 2020, supports our idea of an endogenous response of agent to epidemic evolution, which burst in Italy at the end of February 2020. The severe restrictions on mobility imposed in two steps in March 2020 led to a drop in mobility and economic activity of about 70\% and 25\% with respect to reference period, respectively. The relaxed restrictions in May 2020 led to a bounce back in both variables, but recovery was not complete. In the autumn of 2020, as a result of the second pandemic wave, Italy again experienced new mobility restrictions, with associated reduction in economic activity.

Normalizing economic activity and mobility to 1 in an economy with only susceptible, and taking (\ref{eq:agentIncome}) and (\ref{eq:functionZ}) to formulate a (nonlinear) relationship between mobility and economic activity, a nonlinear estimation procedure produces an estimate of $g$, $A_0^{SR}$ and $A_1^{SR}$ of 0.70229, 0.29805 and 7.74162, respectively. $A_0^{I}$ and $A_1^{I}$ are set to 0.49160 and 0.29805 to accommodate the assumption that mobility of an infected agent is 70\% of the susceptible one.

As regards to $P_0$ and $P_1$, they are set to indicate that, according to (\ref{eq:agentIncome}) and (\ref{eq:consumtption}), average propensity to consume can be expressed as a function of consumption mobility, $P_0$, and $P_1$. Taking the mobility for retail and recreation from Google Mobility Trend\footnote{\url{https://www.google.com/covid19/mobility/}.} as a proxy for consumption mobility, and the quarterly average propensity to consume from Italian national accounts, we estimate $P_0=0.47187$ and $P_1=0.12828$.
Finally, the utility of state deceased $M$ is set equal to $-1.3$ to avoid that, independent of state of epidemic and economic activity, lifetime utility of survival agents can be negative.

\subsection{SIRD versus economic SIRD (ESIRD) model}

% Date and time: Tue, Jun 29, 2021 - 10:43:21
\begin{table}[!htbp] \centering
	\caption{SIRD versus economic SIRD (ESIRD) model with endogenous mobility. Numerical experiments based on the parameters reported in Table \ref{tab:listModelParameters}.}
	\label{tab:dumbVsBaseline}
	\scriptsize{
		\begin{tabular}{@{\extracolsep{5pt}}p{1.8cm}p{1.3cm}p{1.3cm}p{1.3cm}p{1.2cm}p{1cm}p{0.9cm}p{1cm}p{1cm}p{1cm}p{1.cm}}
			\\[-1.8ex]\hline
			\hline \\[-1.8ex]
			Model & Peak prevalence & Cumulative deaths & Minimum of production & Minimum of mobility & Economic loss & Mobility loss & $\mu(425,S)$ & $\mu(425,I)$ & $\mu(425,R)$ & $\mu(425,D)$ (death rate) \\
			\hline \\[-1.8ex]
			SIRD & $17,784,284$ & $408,678$ & $0.87$ & $0.79$ & $-0.011$ & $-0.019 $ & $0.062 $ & $0.000 $ & $0.932$ & $0.007  $ \\
			ESIRD & $5,858,062$ & $297,577$ & $0.883$ & $0.693$ & $-0.032$ & $-0.082$ & $0.314$ & $0.003$ & $0.678$ & $0.005$ \\
			\hline \\[-1.8ex]
		\end{tabular}
	}
\end{table}

Table \ref{tab:dumbVsBaseline} and Figure \ref{fig:comparisonDumbVsEndoMobilitySIR}
highlight the importance of considering endogenous mobility choice in the analysis. In particular, the comparison between the ``na\"ive'' SIRD (where mobility of susceptible, infected and recovered is maintained constant for the whole period of simulation and equal to their initial baseline values), and the ESIRD model (where individual mobility is decided in an optimizing framework without any imposed restriction), points out the 30\% more cumulative deaths of na\"ive SIR as opposed to a lower drop in mobility and production (both as peak and as cumulative impact). After 425 days from its outbreak, the epidemic is substantially ended in both models, that is, $\mu(I)$ is almost zero, but the optimized mobility of an agent in ESIRD has led to a non-negligible mass of susceptibles equal to 31.4\% in day 425 and substantially lower death rate (0.5\% versus 0.7\%).

\begin{figure}[htbp]
	\centering
	\begin{subfigure}[h]{0.49\textwidth}
		\centering
		\includegraphics[width=1\linewidth]{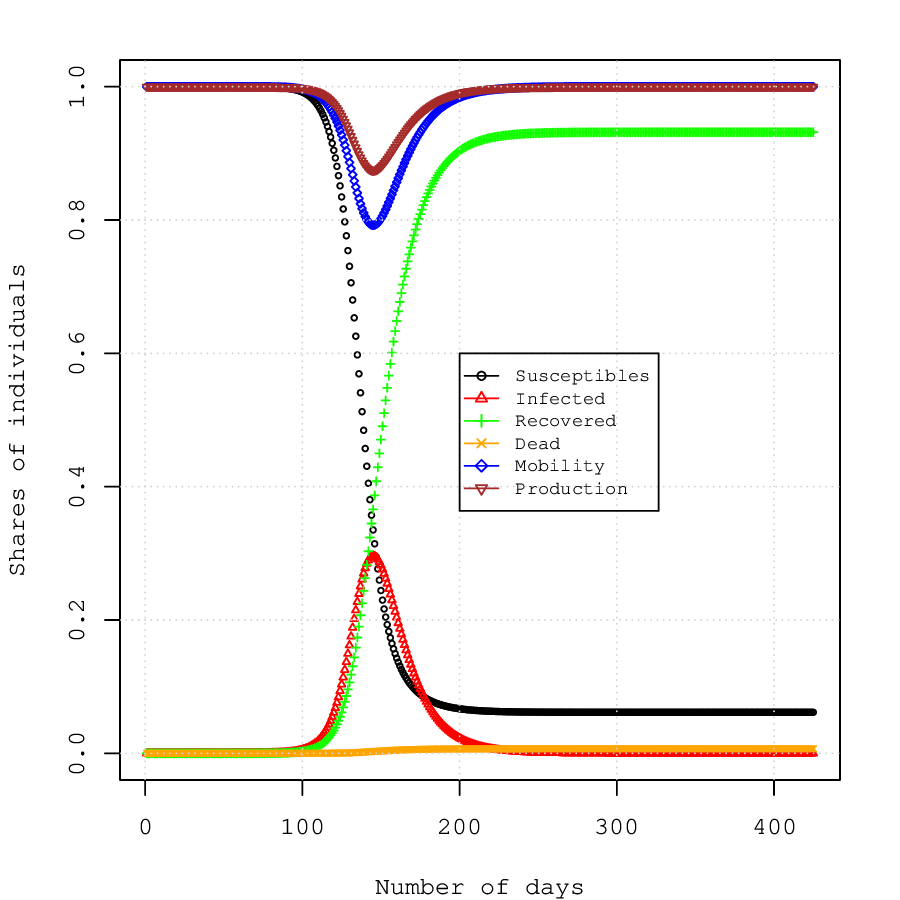}
		\caption{Dynamics of epidemic, economic activity and mobility with ``na\"ive'' agents}
		\label{fig:simSIRModelItalyWithMobilityProduction}
	\end{subfigure}
	\begin{subfigure}[h]{0.49\textwidth}
		\centering
		\includegraphics[width=1\linewidth]{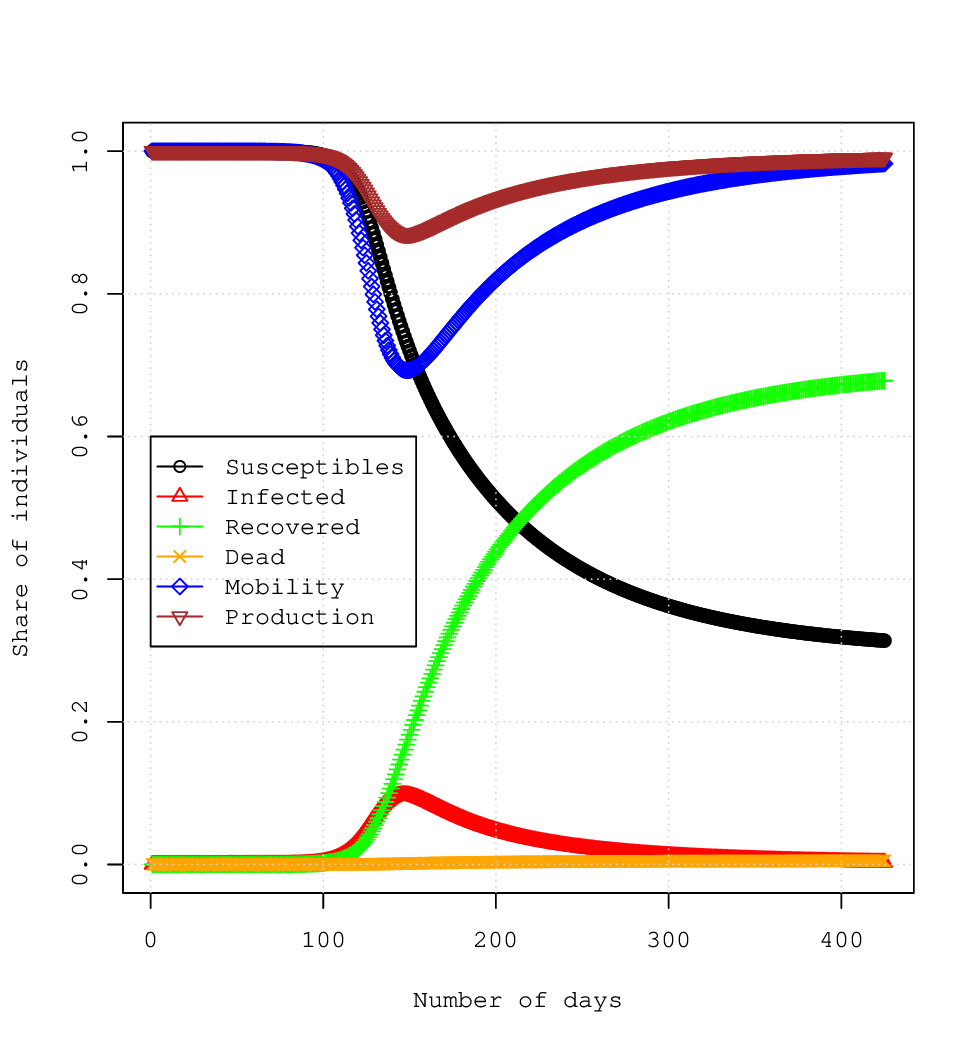}
		\caption{Dynamics of epidemic, economic activity and mobility with agents optimizing their mobility choices.}
		\label{fig:simMacroSIRModelItalyBaselineScenarios}
	\end{subfigure}
	\caption{Comparison between ``na\"ive'' SIRD model versus SIRD model with endogenous mobility. Numerical experiments based on the parameters reported in Table \ref{tab:listModelParameters}.}
	\label{fig:comparisonDumbVsEndoMobilitySIR}
\end{figure}

\section{Questioning the ESIRD\label{sec:questionModel}}

In this section, we discuss how our framework could be used to evaluate alternative policies of mobility restriction.
The high peak prevalence reported for ESIRD in Table \ref{tab:dumbVsBaseline} explains why several countries imposed strong mobility restrictions in 2020. A peak of infected of 5,858,062 agents would correspond to a need of about 398,749 beds in hospitals, taking 6.8\% the proportion of infected
individuals hospitalised \citep{verity2020estimates}. For example, Italy in February 2020 had about 190,000 available beds in hospital, making ''laissez faire'' approach to COVID-19 not practical (not considering the advantage to take time in waiting for a vaccine).

In the following, we therefore study some mitigation strategies as defined in \cite{ferguson2020report} (page 3), that is, ''to use NPIs (non-pharmaceutical intervention) not to interrupt transmission completely, but to reduce the health impact of an epidemic'' in the hope (as it effectively happened) of a rapid development of a vaccine. We will focus on policies that, by increasing mobility costs ($\gamma$s), reduce individual mobility and therefore the infection rate and the peak prevalence. In this regard, \cite{nouvellet2020report} provides strong evidence that reducing mobility is the key factor for bringing down COVID-19 transmission, while \cite{vollmer2020report} present scenario analysis based on different mobility in Italy.

At the same time, reducing mobility negatively impacts production, putting policy makers before a trade-off between economic losses and fatalities due to COVID-19, that is, it is possible to point out a \textit{pandemic possibilities frontier} as in \cite{kaplan2020great} and \cite{acemoglu2020multi}. However, we add two dimensions in the discussion. The first is related to the share of remaining susceptible at the end of the period of analysis, which could facilitate a new outbreak of epidemic in the future. The second related to the social feasibility of some policies based on a long reduction of individual mobility.

Table \ref{tab:alternativeScenarios} reports the effect of different policies increasing (in the same percentage) the cost of mobility for production and consumption with respect to the baseline model when the share of infected individuals exceeds 3\% and to maintain this increase until the share of infected individuals gets down to 0.5\% or to 0.1\% in the more severe scenario (mrs).

% Table created by stargazer v.5.2.2 by Marek Hlavac, Harvard University. E-mail: hlavac at fas.harvard.edu
% Date and time: Tue, Jun 29, 2021 - 10:43:21
\begin{table}[!htbp] \centering
	\caption{Alternative scenarios of restriction of mobility (severity of lockdown) and exit from these restrictions (mrs adopts a more strict threshold for relaxing the restrictions). Numerical experiments based on the parameters reported in Table \ref{tab:listModelParameters}.}
	\label{tab:alternativeScenarios}
	\scriptsize{
		\begin{tabular}{@{\extracolsep{5pt}}p{2.5cm}p{1.3cm}p{1.3cm}p{1.1cm}p{1.cm}p{1cm}p{0.9cm}p{1cm}p{1cm}p{1cm}p{1.cm}}
			\\[-1.8ex]\hline
			\hline \\[-1.8ex]
			Scenario & Peak prevalence & Cumulative deaths & Minimum of production & Minimum of mobility & Economic loss & Mobility loss & $\mu(425,S)$ & $\mu(425,I)$ & $\mu(425,R)$ & $\mu(425,D)$ (death rate) \\
			\hline \\[-1.8ex]
			Baseline ESIRD & $5,858,062$ & $297,577$ & $0.883$ & $0.693$ & $-0.032$ & $-0.082$ & $0.314$ & $0.003$ & $0.678$ & $0.005$ \\
			Cost +10\% & $3,594,938$ & $248,258$ & $0.877$ & $0.651$ & $-0.056$ & $-0.165$ & $0.424$ & $0.006$ & $0.566$ & $0.004$ \\
			Cost +20\% & $1,633,960$ & $160,311$ & $0.867$ & $0.603$ & $-0.083$ & $-0.254$ & $0.626$ & $0.005$ & $0.365$ & $0.003$ \\
			Cost +30\% & $1,275,206$ & $107,837$ & $0.837$ & $0.518$ & $-0.092$ & $-0.280$ & $0.732$ & $0.020$ & $0.246$ & $0.002$ \\
			Cost +40\% & $1,258,593$ & $113,914$ & $0.800$ & $0.439$ & $-0.103$ & $-0.299$ & $0.729$ & $0.010$ & $0.260$ & $0.002$ \\
			Cost +50\% & $1,249,959$ & $111,359$ & $0.753$ & $0.357$ & $-0.113$ & $-0.310$ & $0.733$ & $0.011$ & $0.254$ & $0.002$ \\
			Cost +30\% (mrs) & $1,241,037$ & $75,794$ & $0.835$ & $0.515$ & $-0.100$ & $-0.307$ & $0.824$ & $0.002$ & $0.173$ & $0.001$ \\
			Cost +50\% (mrs) & $1,256,080$ & $67,485$ & $0.747$ & $0.348$ & $-0.122$ & $-0.335$ & $0.841$ & $0.004$ & $0.154$ & $0.001$ \\
			\hline \\[-1.8ex]
		\end{tabular}
	}
\end{table}

Peak prevalence decreases up to a rise of 30\% in mobility cost and then it is almost rigid to further increment (see Table \ref{tab:alternativeScenarios}). Peak prevalence of 1,275,206 individuals would amount to a need of 86,801 beds in hospitals. Non-reported numerical investigations show that to decrease this peak prevalence would require to start mobility restrictions with a lower share of infected individuals than 3\%.

However, increasing mobility costs have also a growing negative impact both on economic activity and on the death rate. This trade-off is represented in Figure \ref{fig:tradeOffIncomeMortalityRateLockdown}, which corresponds to the pandemic possibilities frontier discussed in \cite{kaplan2020great} and \cite{acemoglu2020multi}, but calculated in a very different theoretical framework. We can appreciate from Figure \ref{fig:tradeOffIncomeMortalityRateLockdown} how a scenario with 30\% of additional costs and an exit threshold of 0.1\% from mobility restriction Pareto dominates the scenarios both with 40\% and 50\% of additional costs and an exit threshold of 0.5\%.

\begin{figure}[htbp]
	\centering
	%	\begin{subfigure}[h]{0.32\textwidth}
	%		\centering
	%		\includegraphics[width=1\linewidth]{tradeOffIncomeMobilityLockdown.eps}
	%			\caption{Relationship between cumulative economic losses and cumulative mobility losses in different scenarios}
	%		\label{fig:tradeOffIncomeMobilityLockdown}
	%	\end{subfigure}
	\begin{subfigure}[h]{0.48\textwidth}
		\centering
		\includegraphics[width=1\linewidth]{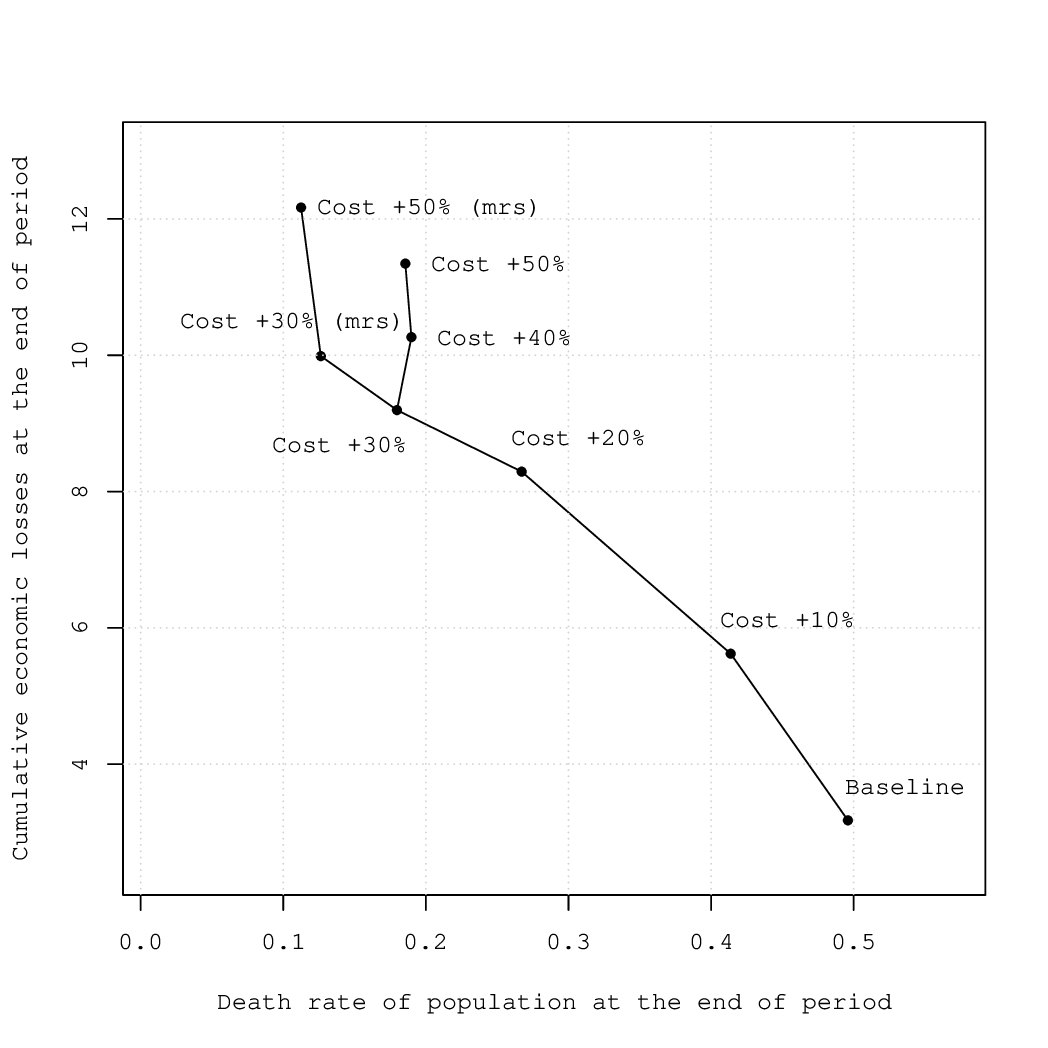}
		\caption{Trade-off between cumulative economic losses and cumulative death rates (number of fatalities on total population) after 425 days from the outbreak of the epidemic in different scenarios}
		\label{fig:tradeOffIncomeMortalityRateLockdown}
	\end{subfigure}
	\begin{subfigure}[h]{0.48\textwidth}
		\centering
		\includegraphics[width=1\linewidth]{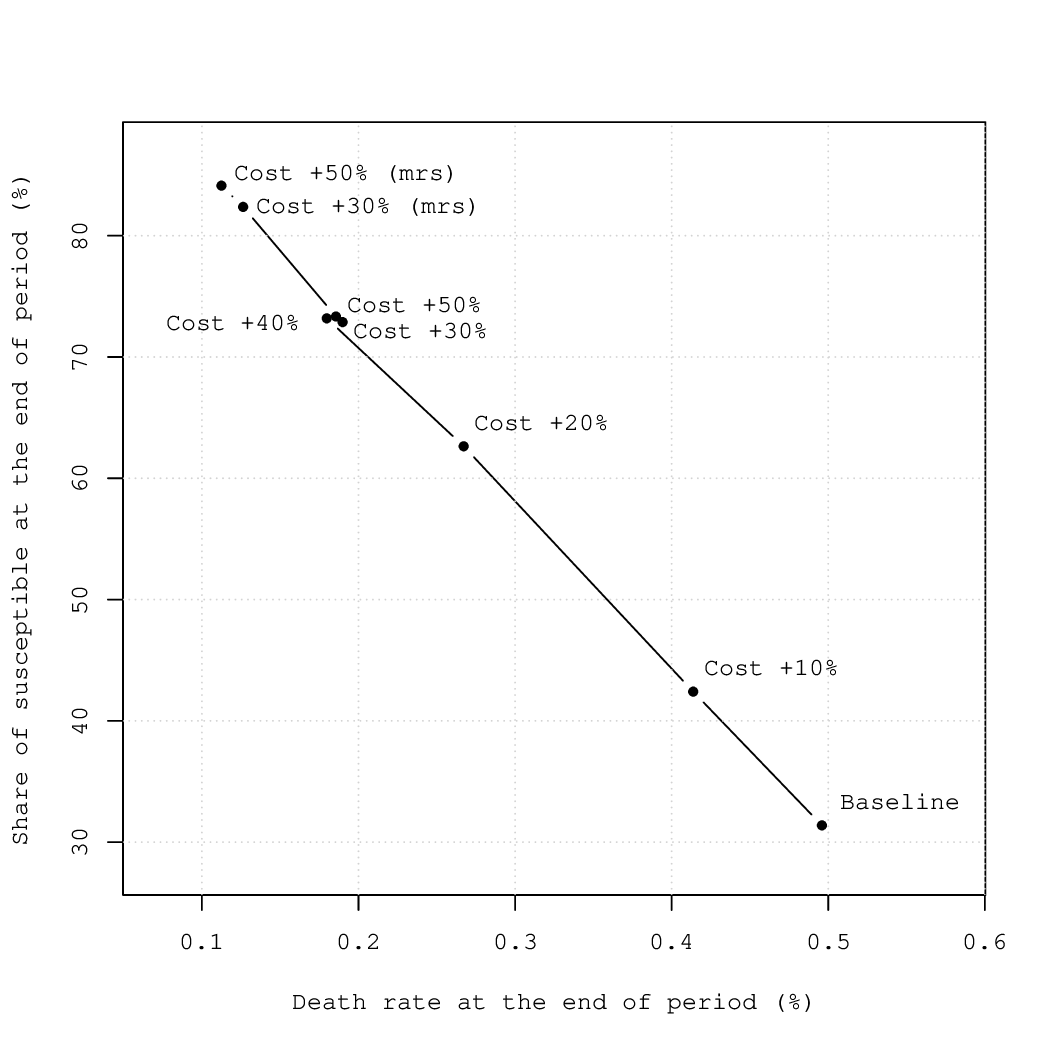}
		\caption{Trade-off between cumulative death rates (number of fatalities on total population) and the share of susceptibles after 425 days from the outbreak of the epidemic in different scenarios}
		\label{fig:tradeOffMoralityRateSusceptibleLockdown}
	\end{subfigure}
	\caption{Trade-offs in alternative scenarios of mobility restrictions and exit from these restrictions. Numerical experiments based on the parameters reported in Table \ref{tab:listModelParameters}.}
	\label{fig:tradeOffAlternativeScenarios}	
\end{figure}

\begin{figure}[htbp]
	\centering
	%	\begin{subfigure}[h]{0.32\textwidth}
	%		\centering
	%		\includegraphics[width=1\linewidth]{infectionRateItalyBaselineScenariosAddCost0.05.eps}
	%		\caption{Infection rate when lockdown implies an increase of 5\% of cost of mobility}
	%		\label{fig:infectionRateItalyBaselineScenariosAddCost0.05}
	%	\end{subfigure}
	\begin{subfigure}[h]{0.32\textwidth}
		\centering
		\includegraphics[width=1\linewidth]{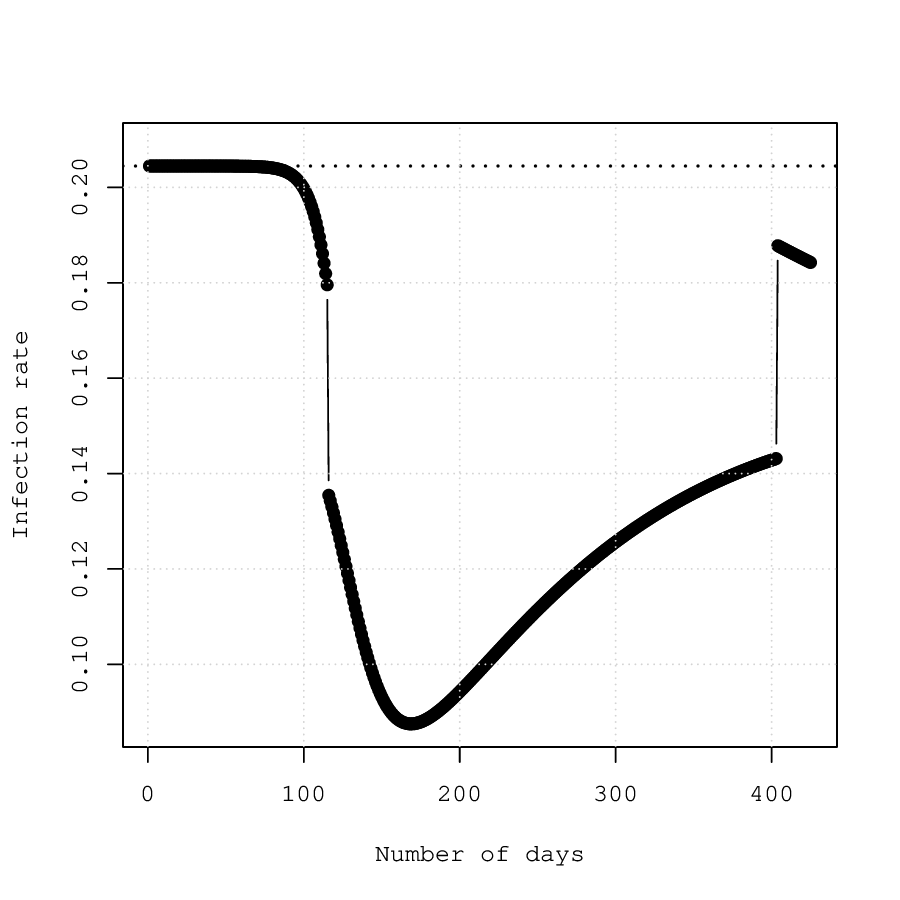}
		\caption{Infection rate when lockdown implies an increase of 10\% of cost of mobility}
		\label{fig:infectionRateItalyBaselineScenariosAddCost0.1}
	\end{subfigure}
	%	\begin{subfigure}[h]{0.32\textwidth}
	%			\centering
	%			\includegraphics[width=1\linewidth]{infectionRateItalyBaselineScenariosAddCost0.15.eps}
	%			\caption{Infection rate when lockdown implies an increase of 15\% of cost of mobility}
	%			\label{fig:infectionRateItalyBaselineScenariosAddCost0.15}
	%	\end{subfigure}
	\begin{subfigure}[h]{0.32\textwidth}
		\centering
		\includegraphics[width=1\linewidth]{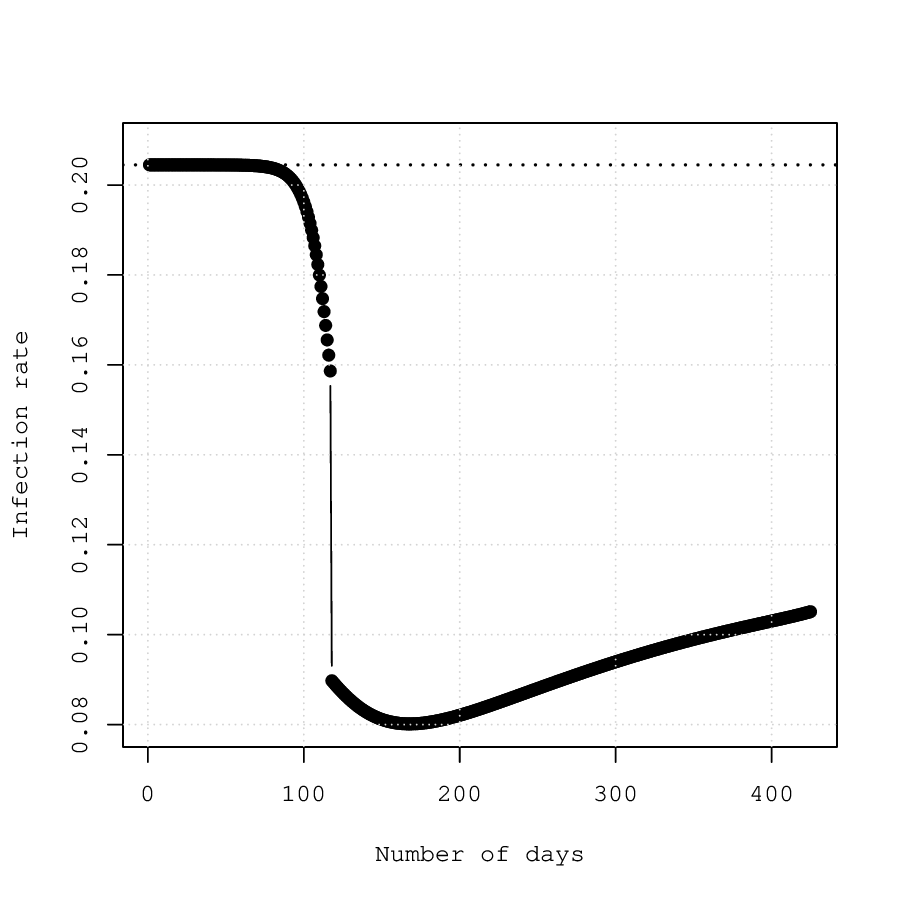}
		\caption{Infection rate when lockdown implies an increase of 20\% of cost of mobility}
		\label{fig:infectionRateItalyBaselineScenariosAddCost0.2}
	\end{subfigure}
	%	\begin{subfigure}[h]{0.32\textwidth}
	%			\centering
	%			\includegraphics[width=1\linewidth]{infectionRateItalyBaselineScenariosAddCost0.25.eps}
	%			\caption{Infection rate when lockdown implies an increase of 25\% of cost of mobility}
	%			\label{fig:infectionRateItalyBaselineScenariosAddCost0.25}
	%	\end{subfigure}
	\begin{subfigure}[h]{0.32\textwidth}
		\centering
		\includegraphics[width=1\linewidth]{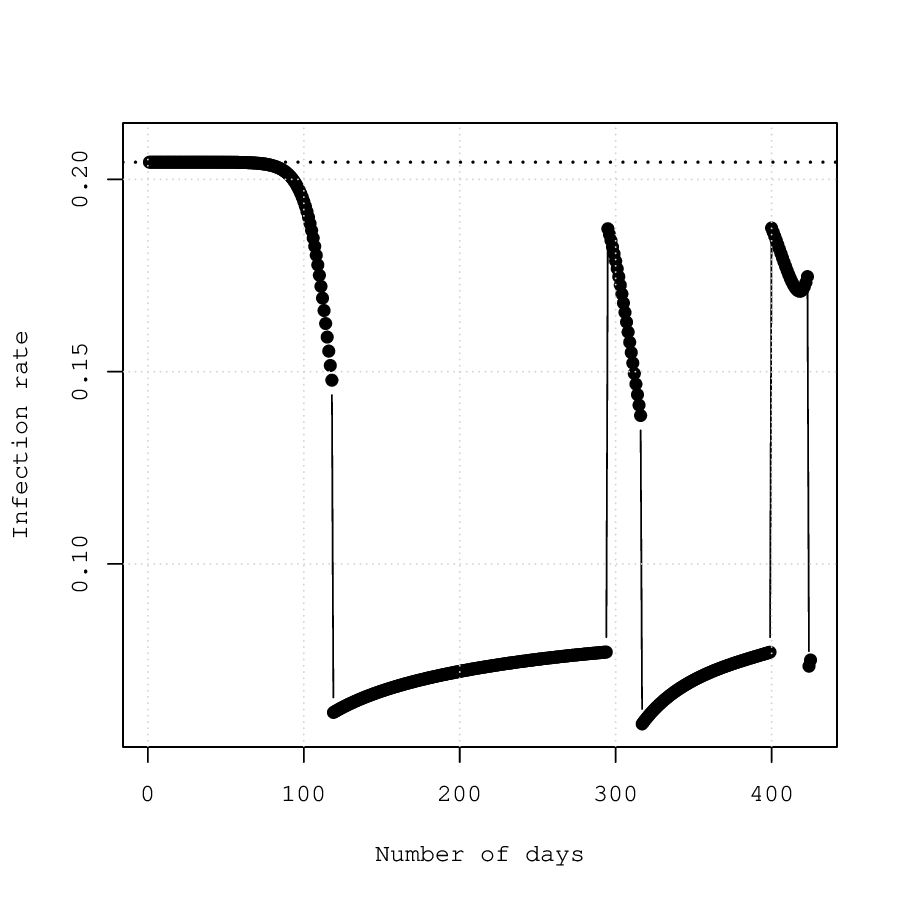}
		\caption{Infection rate when lockdown implies an increase of 30\% of cost of mobility}
		\label{fig:infectionRateItalyBaselineScenariosAddCost0.3}
	\end{subfigure}
	%	\begin{subfigure}[h]{0.32\textwidth}
	%			\centering
	%			\includegraphics[width=1\linewidth]{infectionRateItalyBaselineScenariosAddCost0.4.eps}
	%			\caption{Infection rate when lockdown implies an increase of 40\% of cost of mobility}
	%			\label{fig:infectionRateItalyBaselineScenariosAddCost0.4}
	%	\end{subfigure}
	%	\begin{subfigure}[h]{0.32\textwidth}
	%				\centering
	%				\includegraphics[width=1\linewidth]{infectionRateItalyBaselineScenariosAddCost0.45.eps}
	%				\caption{Infection rate when lockdown implies an increase of 45\% of cost of mobility}
	%				\label{fig:infectionRateItalyBaselineScenariosAddCost0.45}
	%	\end{subfigure}
	\begin{subfigure}[h]{0.32\textwidth}
		\centering
		\includegraphics[width=1\linewidth]{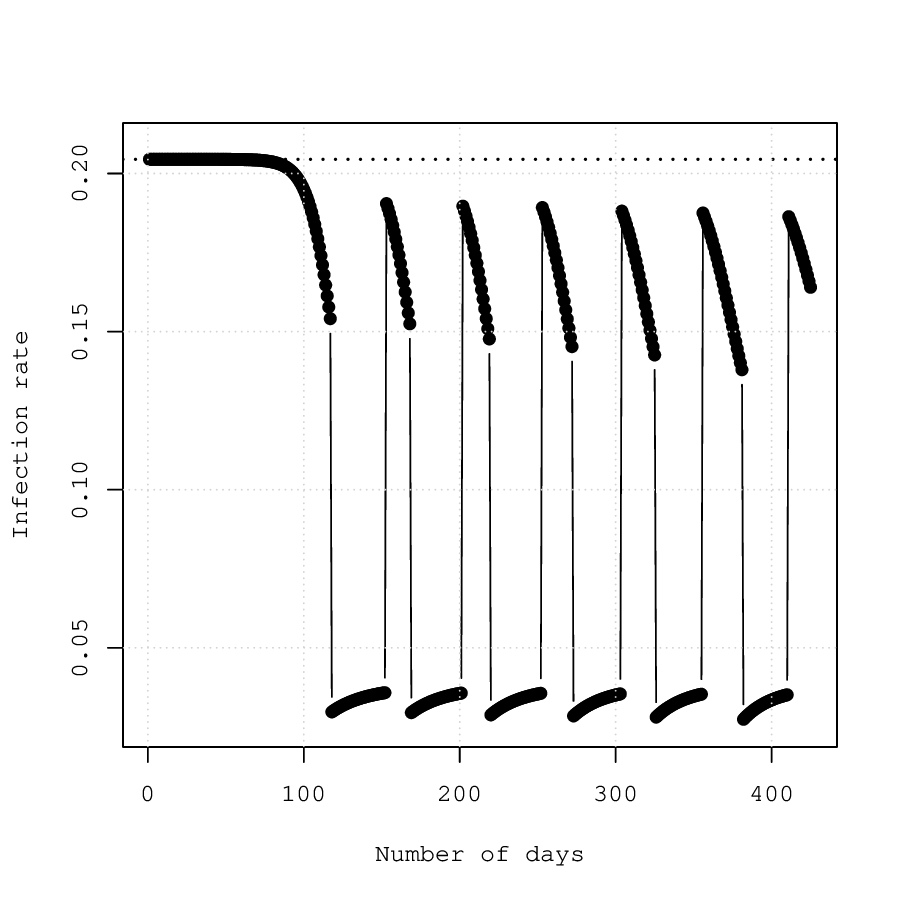}
		\caption{Infection rate when lockdown implies an increase of 50\% of cost of mobility}
		\label{fig:infectionRateItalyBaselineScenariosAddCost0.5}
	\end{subfigure}
	\begin{subfigure}[h]{0.32\textwidth}
		\centering
		\includegraphics[width=1\linewidth]{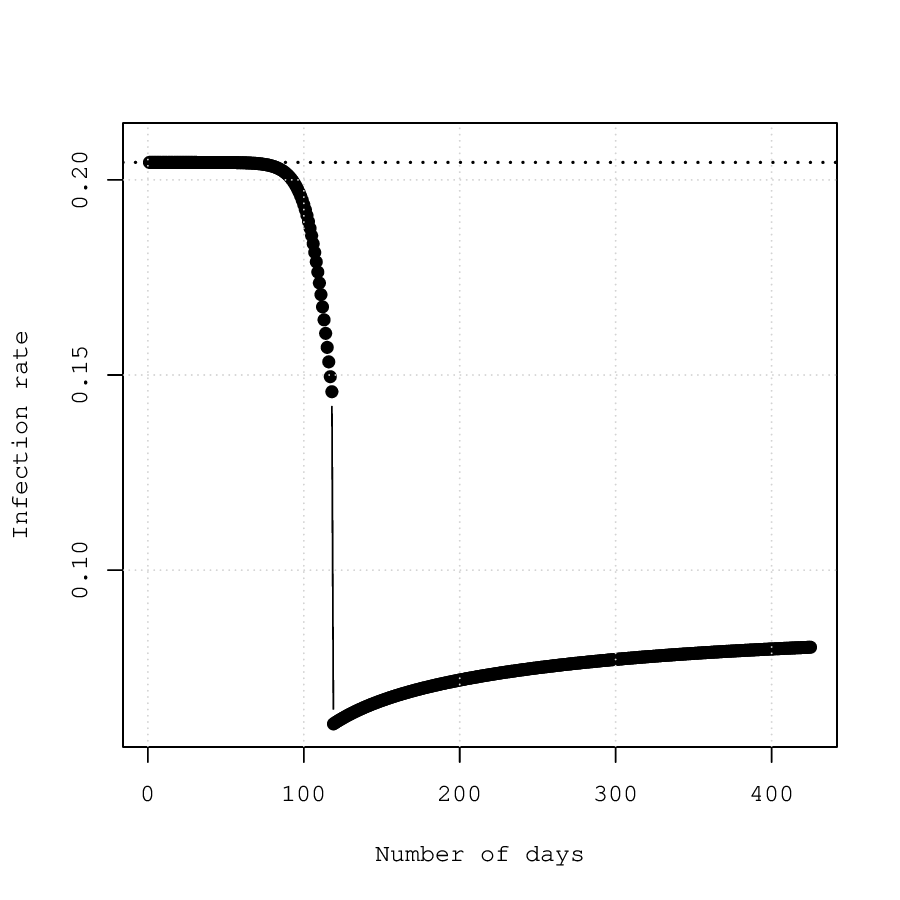}
		\caption{Infection rate when lockdown implies an increase of 30\% of cost of mobility and more restrictive conditions for the exit of lockdown}
		\label{fig:infectionRateItalyBaselineScenariosAddCost0.3exit0.001}
	\end{subfigure}
	\begin{subfigure}[h]{0.32\textwidth}
		\centering
		\includegraphics[width=1\linewidth]{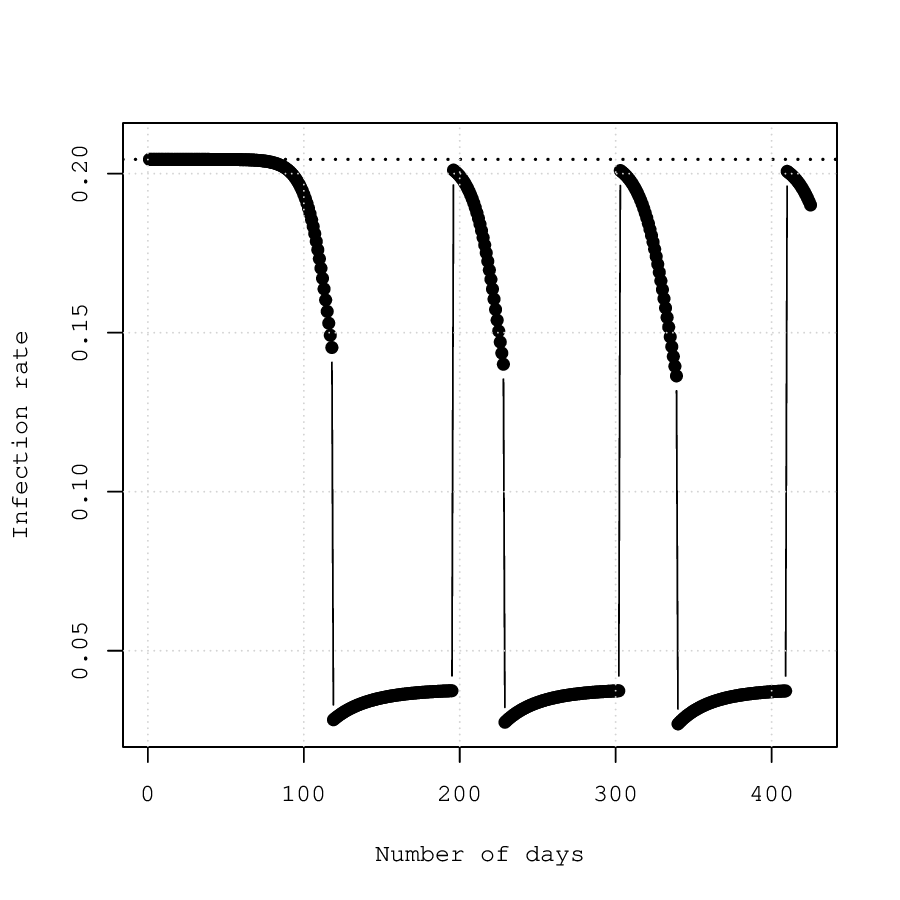}
		\caption{Infection rate when lockdown implies an increase of 50\% of cost of mobility and more restrictive conditions for the exit of lockdown}
		\label{fig:infectionRateItalyBaselineScenariosAddCost0.5exit0.001}
	\end{subfigure}
	\caption{Dynamics of infection rate in different scenarios of mobility restriction (severity of lockdown) and exit from these restrictions (threshold for relaxing the restrictions). Numerical experiments based on the parameters reported in Table \ref{tab:listModelParameters}.}
	\label{fig:infectionRateDifferentScenarios}	
	\end{figure}

However, the former scenario presents two additional non-favorable characteristics with respect to the latter. First, as reported in Figure \ref{fig:tradeOffMoralityRateSusceptibleLockdown}, the share of susceptibles after 425 days from the outbreak of the epidemics is substantially higher (82.4\% versus 73.3\%); moreover, as highlighted by Figures \ref{fig:infectionRateItalyBaselineScenariosAddCost0.3exit0.001} and \ref{fig:simMacroSIRModelItalyBaselineScenariosAddCost0.3exit0.001}, it requires a prolonged period of mobility restrictions (almost one year!). In this respect, scenarios with 30\% of additional cost and an exit threshold of 0.5\% or with 50\% of additional costs and an exit threshold of 0.1\% endogenously present a succession of periods with and without mobility restrictions, making this scenario more socially feasible.

\begin{figure}[htbp]
	\centering
	%	\begin{subfigure}[h]{0.32\textwidth}
	%		\centering
	%		\includegraphics[width=1\linewidth]{}
	%		\caption{Dynamics when lockdown implies an increase of 5\% of costs of mobility}
	%		\label{fig:simMacroSIRModelItalyBaselineScenariosAddCost0.05}
	%	\end{subfigure}
	\begin{subfigure}[h]{0.32\textwidth}
		\centering
		\includegraphics[width=1\linewidth]{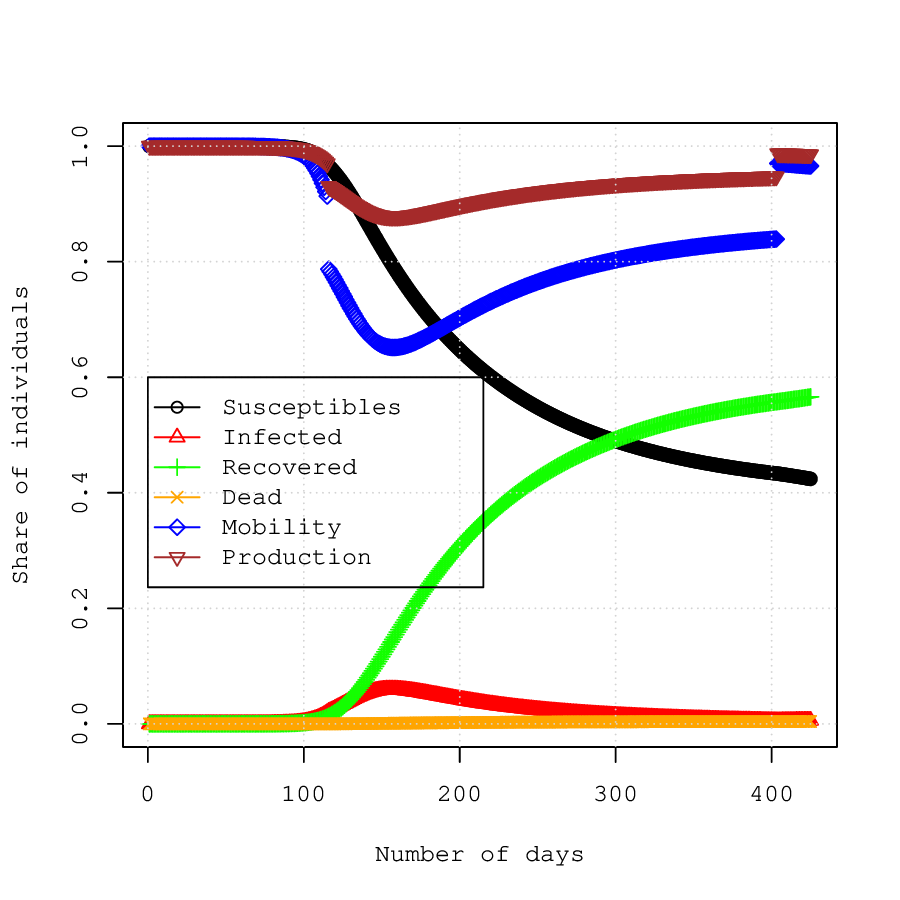}
		\caption{Dynamics when lockdown implies an increase of 10\% of costs of mobility}
		\label{fig:simMacroSIRModelItalyBaselineScenariosAddCost0.1}
	\end{subfigure}
	%	\begin{subfigure}[h]{0.32\textwidth}
	%		\centering
	%		\includegraphics[width=1\linewidth]{simMacroSIRModelItalyBaselineScenariosAddCost0.15.eps}
	%		\caption{Dynamics when lockdown implies an increase of 15\% of costs of mobility}
	%		\label{fig:simMacroSIRModelItalyBaselineScenariosAddCost0.15}
	%	\end{subfigure}
	\begin{subfigure}[h]{0.32\textwidth}
		\centering
		\includegraphics[width=1\linewidth]{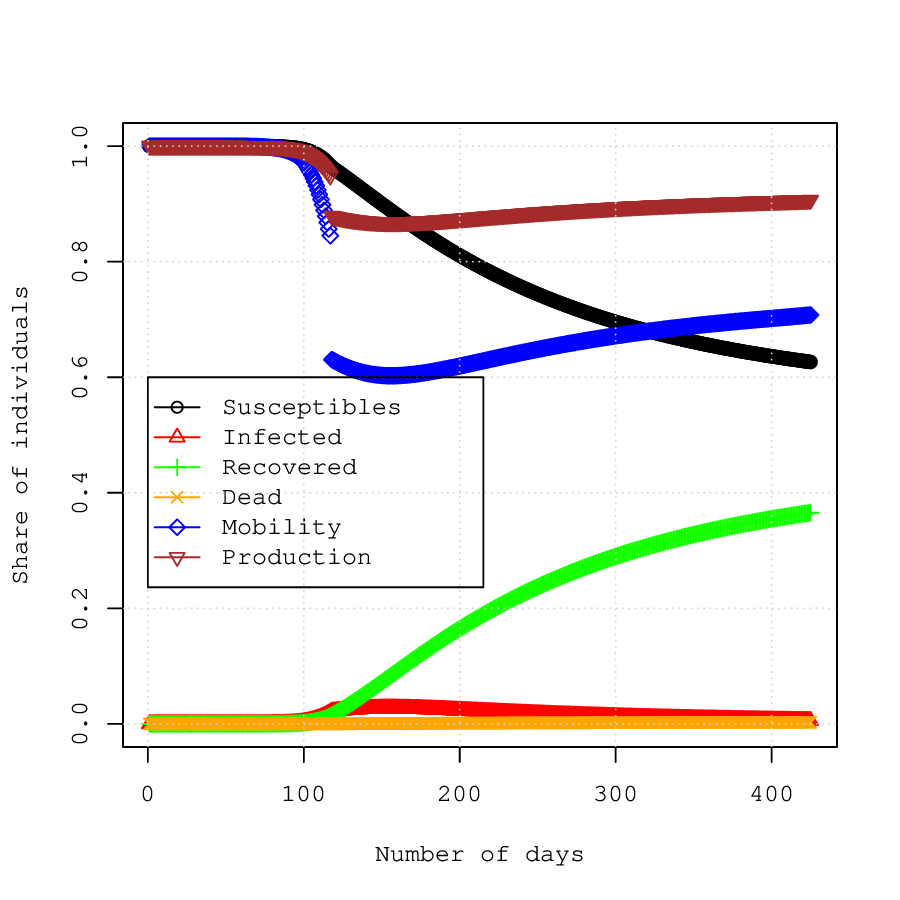}
		\caption{Dynamics when lockdown implies an increase of 20\% of costs of mobility}
		\label{fig:simMacroSIRModelItalyBaselineScenariosAddCost0.20}
	\end{subfigure}
	%	\begin{subfigure}[h]{0.32\textwidth}
	%		\centering
	%		\includegraphics[width=1\linewidth]{simMacroSIRModelItalyBaselineScenariosAddCost0.25.eps}
	%		\caption{Dynamics when lockdown implies an increase of 25\% of costs of mobility}
	%		\label{fig:simMacroSIRModelItalyBaselineScenariosAddCost0.2}
	%	\end{subfigure}
	\begin{subfigure}[h]{0.32\textwidth}
		\centering
		\includegraphics[width=1\linewidth]{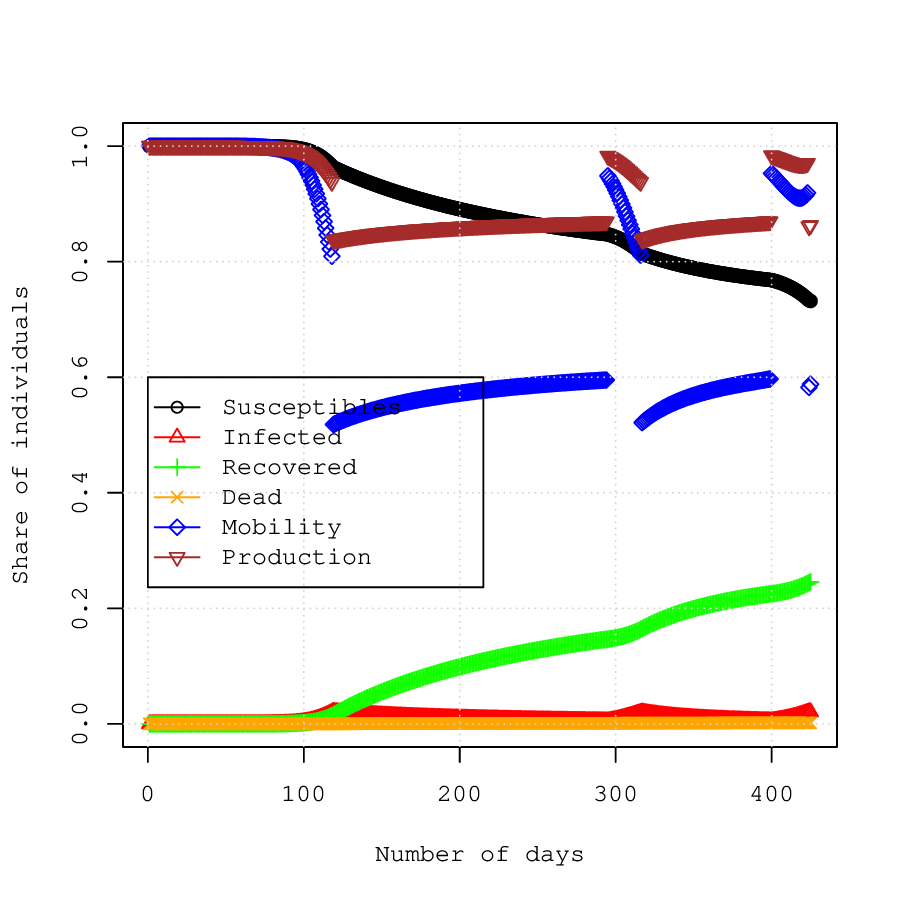}
		\caption{Dynamics when lockdown implies an increase of 30\% of costs of mobility}
		\label{fig:simMacroSIRModelItalyBaselineScenariosAddCost0.3}
	\end{subfigure}
	%	\begin{subfigure}[h]{0.32\textwidth}
	%		\centering
	%		\includegraphics[width=1\linewidth]{simMacroSIRModelItalyBaselineScenariosAddCost0.4.eps}
	%		\caption{Dynamics when lockdown implies an increase of 40\% of cost of mobility}
	%		\label{fig:simMacroSIRModelItalyBaselineScenariosAddCost0.4}
	%	\end{subfigure}
	%	\begin{subfigure}[h]{0.32\textwidth}
	%		\centering
	%		\includegraphics[width=1\linewidth]{}
	%		\caption{Dynamics when lockdown implies an increase of 45\% of costs of mobility}
	%		\label{fig:simMacroSIRModelItalyBaselineScenariosAddCost0.45}
	%	\end{subfigure}
	\begin{subfigure}[h]{0.32\textwidth}
		\centering
		\includegraphics[width=1\linewidth]{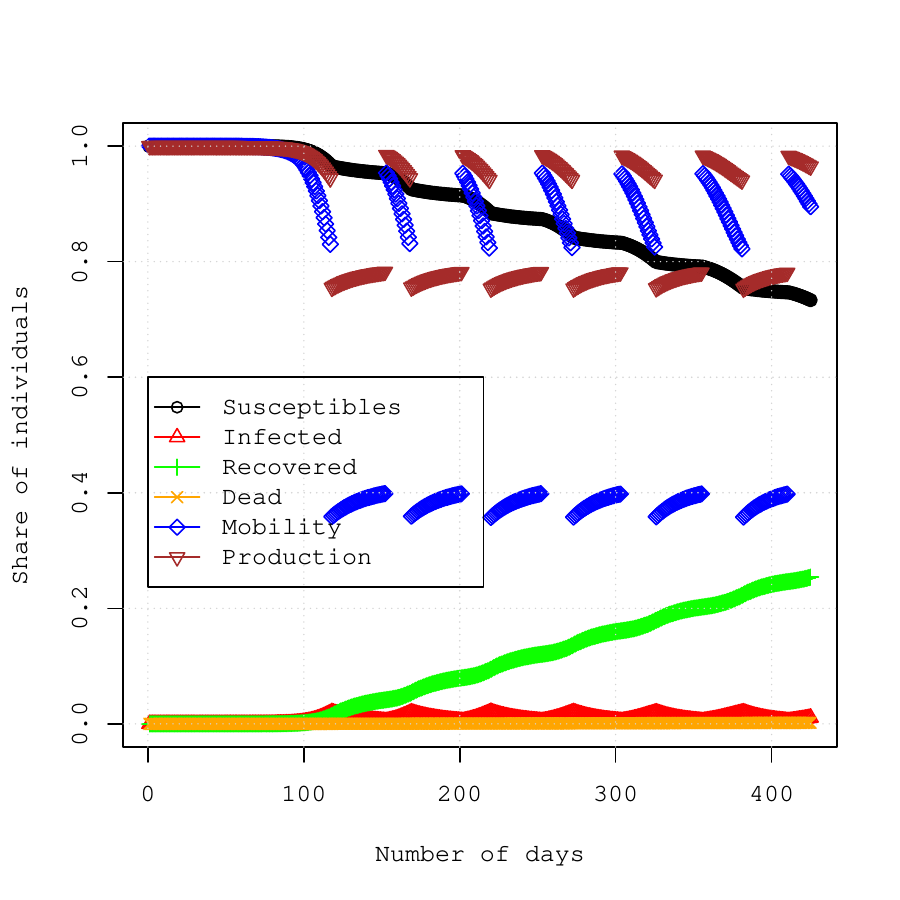}
		\caption{Dynamics when lockdown implies an increase of 50\% of costs of mobility}
		\label{fig:simMacroSIRModelItalyBaselineScenariosAddCost0.5}
	\end{subfigure}
	\begin{subfigure}[h]{0.32\textwidth}
		\centering
		\includegraphics[width=1\linewidth]{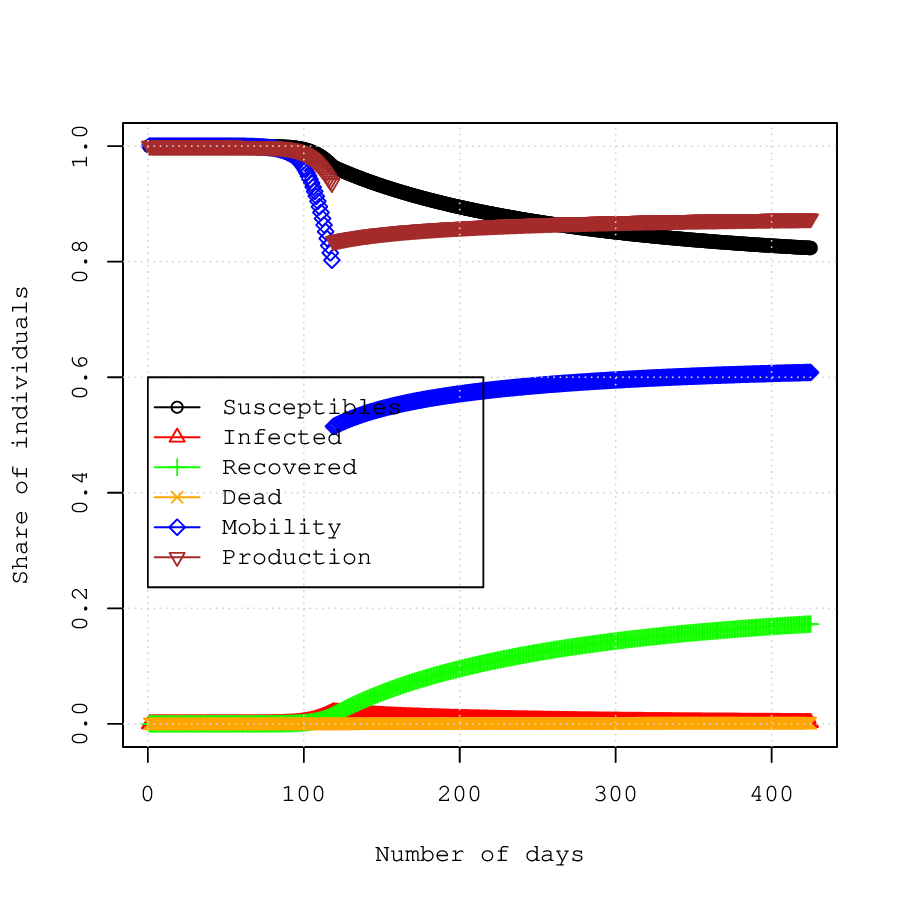}
		\caption{Dynamics when lockdown implies an increase of 30\% of cost of mobility and more restrictive conditions for the exit of lockdown}
		\label{fig:simMacroSIRModelItalyBaselineScenariosAddCost0.3exit0.001}
	\end{subfigure}
	\begin{subfigure}[h]{0.32\textwidth}
		\centering
		\includegraphics[width=1\linewidth]{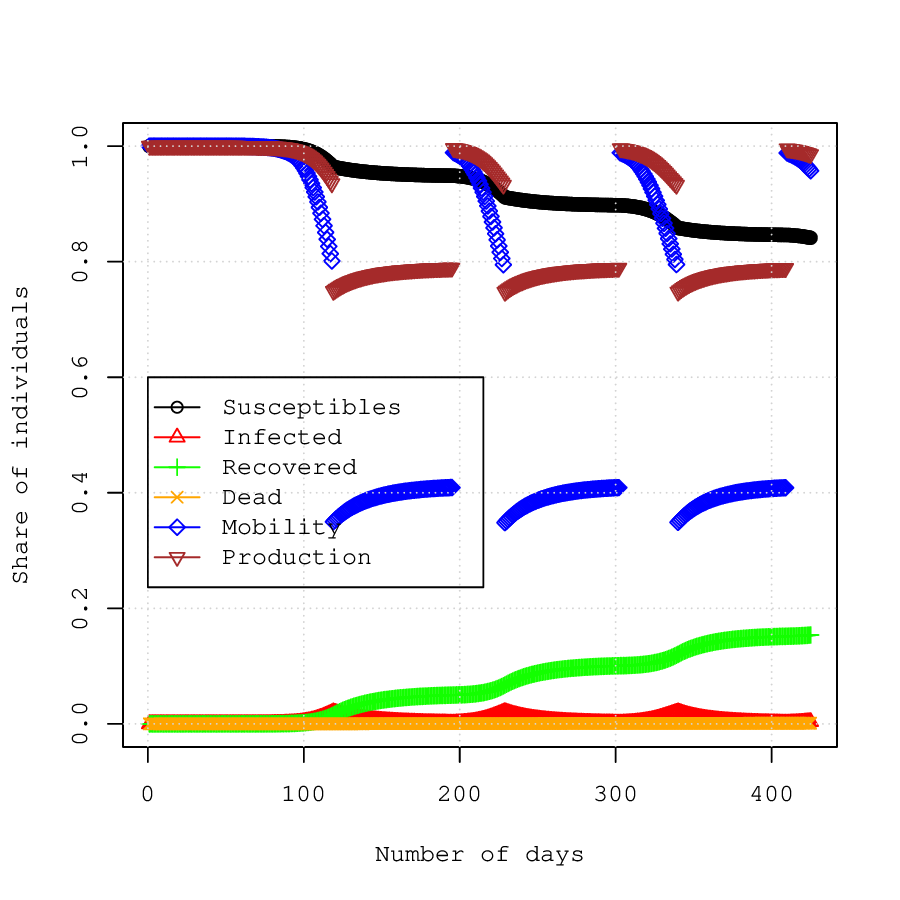}
		\caption{Dynamics when lockdown implies an increase of 50\% of cost of mobility and more restrictive conditions for the exit of lockdown}
		\label{fig:simMacroSIRModelItalyBaselineScenariosAddCost0.5exit0.001}
	\end{subfigure}
	\caption{Dynamics of epidemics and of main economic variables in alternative scenarios of mobility restrictions and exit from these restrictions. Numerical experiments based on the parameters reported in Table \ref{tab:listModelParameters}.}
	\label{fig:dynamicsDifferentScenarios}
\end{figure}

We conclude by observing that, even though individuals are perfectly informed of restriction policy and of the behavior of the pandemic, several scenarios include waves of infections, as a result of the endogenous switching between a regime with mobility restrictions and one without any restrictions (see, e.g., Figures \ref{fig:simMacroSIRModelItalyBaselineScenariosAddCost0.3}-\ref{fig:simMacroSIRModelItalyBaselineScenariosAddCost0.5exit0.001}).

\section{Concluding remarks\label{sec:concludingRemarks}}

We provided a dynamic macroeconomic equilibrium model with pandemic, denoted ESIRD, where perfect-foresight forward looking agents' (short-term) mobility positively affects their income (and consumption), but also contributes to the spread of the pandemic in an extended SIRD model.
Dynamics of economy and pandemic is jointly driven by strategic complementaries in production and negative externalities on infection rates of individual mobilities. We therefore addressed one of the main economic-driven leverages of compartmental epidemiological models, that is, the endogenization of reproduction rate of epidemic \citep{avery2020economist}.
After having proven the existence of a Nash equilibrium and studied the recursive construction of equilibrium(a), we conducted some numerical investigations on the forward-backward system resulting from individual optimizing behavior, calibrating model's parameters on Italian experience on COVID-19 in 2020-2021.

In our ESIRD model, the forward-looking behavior of agents tended to smooth the peak prevalence of pandemic compared to the simplest SIRD model with ``na\"ive'' agents, but in our numerical explorations peak prevalence appeared to be still too high to be sustainable for the Italian health system (e.g., in relation to the number of available beds in hospitals). Once we established that self-regulation of individual mobility decisions was not sufficient to manage the pandemic\footnote{{The model allows to give an answer to the provocative question posed, among others, by \cite{cochrane2020sir} on the viability of a containment policy based only on self-confinement of individuals free of any governmental restrictions on mobility. At least for the Italian experience in 2020, our model suggested that a policy only based on self-confinement would have resulted in a peak prevalence of nearly six million infected people (see Section \ref{sec:questionModel}), which corresponds to a need of about four hundred thousand beds in hospitals. This would have been unsustainable for a country having, in February 2020, about 190,000 beds in hospitals, most of them already occupied by patients with COVID-19 independent pathologies.}}, we evaluated different regimes of mobility restrictions, which could be easily accommodate within our theoretical framework.

In particular, we argued that regimes compatible with the saturation of the healthcare system must be evaluated in terms of a trade-off between economic losses and fatalities as proposes, e.g., by \cite{kaplan2020great, acemoglu2020multi}, but also for their social feasibility of maintaining prolonged periods of mobility restrictions and for leaving higher shares of susceptible at the end of the period, which makes new outbreak of epidemic more likely. In this respect, we pointed out that successive small waves of epidemic can be the result of an efficient regime of mobility restrictions.

Our analysis raises a series of issues for future research.

We ignore heterogeneity of population in terms of ''risk groups'' (typically, in case of COVID-19, age cohorts, see \citealp{salje2020estimating} and \citealp{acemoglu2020multi}), and therefore we cannot evaluate any policy conditioned to individual characteristics, as, for instance, done by \cite{brotherhood2020economic} or \cite{gollier2020cost}.
We also focus on a world before the vaccine, that is standard in this kind of models \citep{boppart2020integrated} and consistent with the period used to calibrate the model. However, in a world with vaccine, or with an expected date of its availability, different questions arise for the timing, targets and costs of vaccination \citep{hung2021single} as well as for the timing of mobility restrictions. Finally, we did not include other non-pharmaceutical interventions, and in particular we do not model testing policies, as, for instance, in \cite{eichenbaum2020macroeconomicstesting}.

Some extensions of empirical analysis appear very promising. Firstly, this is the possibility to study scenarios where mobility restrictions are (mostly) focused on mobility for production or on mobility for consumption. For example, in Europe the second waves of restrictive measures in the period Oct 2020 - May 2021, largely revolved around mobility for consumption.\footnote{See for instance, for France, JORF 0080, 3 April 2021, Text 28, \url{https://www.legifrance.gouv.fr/jorf/id/JORFTEXT000043327303}.}
A second extension concerns the more precise estimation of the relationship between individual mobility, aggregate mobility and production in the presence of strategic complementarities, which poses non trivial issue of identification \citep{manski2000economic}.

We also neglect the possibility of introducing masking and using alternative protective equipment against the epidemic. In case their use is mandatory, it should be equivalent to an exogenous reduction of $\beta_p$ and $\beta_c$ in Eq. (\ref{eq:taudef}) that, by reducing the infection rate, would lead to an increase in the individuals' mobility. Much more complicated is the case in which their use is an individual choice, and their use involves a cost. We should consider a possible free-riding problem because the net benefits of using a mask are decreasing if other individuals are already using a mask.

From the theoretical point of view, we leave open the question of the uniqueness of equilibrium and to obtain stronger properties of the equilibria. A possible answer is to look at the \textit{Master Equation} associated to our model, as suggested in Section 1.4 in \citet{Cardaliaguet2020}.
\begin{small}
	\bibliographystyle{apalike}
	\bibliography{biblioSpatialEpidemic}
\end{small}

\newpage

\appendix

\begin{footnotesize}

\section{Proofs}
\label{app:proofs}

\begin{Theorem}[Tikhonov's fixed point Theorem]\label{th:t}
	Let $\mathcal{V}$ be a locally convex topological vector space, let $\mathcal{Q}\subseteq \mathcal{V}$  be a nonempty compact convex set, and let  $F:\mathcal{Q}\to\mathcal{Q}$ be a continuous function. Then $F$  has a fixed point.
\end{Theorem}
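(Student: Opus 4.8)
The statement is the classical Schauder--Tychonoff fixed point theorem, and the plan is to prove it by reducing the infinite-dimensional problem to Brouwer's fixed point theorem in $\R^{n}$ via a finite-dimensional approximation. Throughout I assume, as is customary, that $\mathcal{V}$ is Hausdorff; the general case reduces to this one by passing to the Hausdorff quotient of $\mathcal{V}$. The finite-dimensional input I would invoke is Brouwer's theorem: every continuous self-map of a nonempty compact convex subset of a finite-dimensional normed space has a fixed point.

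The core of the argument is the construction of \emph{Schauder projections}. Fix a convex balanced open neighbourhood $U$ of $0$ in $\mathcal{V}$ (such a neighbourhood base exists because $\mathcal{V}$ is locally convex), and let $p_{U}$ be its Minkowski gauge, so that $U=\{p_{U}<1\}$. Since $\mathcal{Q}$ is compact, finitely many translates $x_{1}+U,\dots,x_{n}+U$ with $x_{i}\in\mathcal{Q}$ cover $\mathcal{Q}$. Define continuous functions $\lambda_{i}:\mathcal{Q}\to[0,\infty)$ by $\lambda_{i}(x)=\max\{0,\,1-p_{U}(x-x_{i})\}$; at every point at least one $\lambda_{i}$ is positive, and $\lambda_{i}(x)>0$ forces $x-x_{i}\in U$. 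Normalizing, set $P_{U}(x)=\big(\sum_{j}\lambda_{j}(x)\big)^{-1}\sum_{i}\lambda_{i}(x)\,x_{i}$. Then $P_{U}$ is a continuous map of $\mathcal{Q}$ into the finite-dimensional compact convex set $\mathcal{Q}_{U}:=\mathrm{co}\{x_{1},\dots,x_{n}\}\subseteq\mathcal{Q}$, and since $P_{U}(x)$ is a convex combination of points $x_{i}$ with $x_{i}-x\in U$ (using that $U$ is balanced, hence $-U=U$) and $U$ is convex, one gets $P_{U}(x)-x\in U$ for every $x\in\mathcal{Q}$.

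Next I would apply Brouwer to the composition $F_{U}:=P_{U}\circ F\big|_{\mathcal{Q}_{U}}:\mathcal{Q}_{U}\to\mathcal{Q}_{U}$, which is a continuous self-map of a compact convex subset of a finite-dimensional space; this yields $x_{U}\in\mathcal{Q}_{U}$ with $P_{U}(F(x_{U}))=x_{U}$, whence $x_{U}-F(x_{U})=P_{U}(F(x_{U}))-F(x_{U})\in U$. Letting $U$ range over a base of convex balanced neighbourhoods of $0$, directed by reverse inclusion, the family $(x_{U})$ is a net in the compact set $\mathcal{Q}$ with $x_{U}-F(x_{U})\to0$. By compactness it has a subnet $x_{U_{\alpha}}\to x^{\ast}\in\mathcal{Q}$; continuity of $F$ gives $F(x_{U_{\alpha}})\to F(x^{\ast})$, and combined with $x_{U_{\alpha}}-F(x_{U_{\alpha}})\to0$ this forces $x^{\ast}=F(x^{\ast})$.

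I expect the main obstacle to be the careful construction and analysis of the Schauder projection --- in particular establishing the key inclusion $P_{U}(x)-x\in U$ and the continuity of the associated partition of unity --- together with the bookkeeping of nets rather than sequences in the final compactness/limit step, since in a general locally convex space one cannot argue along subsequences. Once Brouwer's theorem is taken as given, the rest is essentially routine.
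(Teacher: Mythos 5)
Your proposal is correct, but there is nothing in the paper to compare it against: the paper states Tikhonov's fixed point theorem as a known classical result (it is the Tychonoff--Schauder theorem) and offers no proof, using it only as a black box in the proof of Theorem \ref{th:existence}. What you have written is the standard textbook argument --- cover the compact set $\mathcal{Q}$ by finitely many translates $x_{i}+U$ of a convex balanced neighbourhood, build the Schauder projection $P_{U}$ from the partition of unity $\lambda_{i}(x)=\max\{0,1-p_{U}(x-x_{i})\}$, verify $P_{U}(x)-x\in U$, apply Brouwer to $P_{U}\circ F$ on $\mathrm{co}\{x_{1},\dots,x_{n}\}$, and pass to a convergent subnet of approximate fixed points --- and all the steps check out. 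The only point stated a little glibly is the reduction of the non-Hausdorff case to the Hausdorff one: $F$ need not descend to the Hausdorff quotient, so that remark would need more care if one really wanted the theorem without a separation hypothesis. This is immaterial here, since the space $\mathcal{V}$ to which the paper applies the theorem (sequences valued in $\R^{4}$ with the topology of pointwise convergence) is Hausdorff, and most formulations of the theorem include that hypothesis anyway.
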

\begin{proof}
	 See, e.g., Theorem (1.10), p.\,147 of \cite{granas2003fixed}.
	\end{proof}

\bigskip

\begin{proof}[Proof of Theorem \ref{th:existence}]
	Fix $\bm\mu(0)\in\mathcal{P}(\K)$ and  $k(0)\in\K$.
Consider the space of sequences
$$
\mathcal{V}:=\bigg\{\bm q=(q_{R},q_{I},q_{S},q_{D}):\N\to \R^{8}  \bigg\}
$$
endowed with the topology of  pointwise convergence. The latter is  a locally convex  topological vector space, since the topology is induced by the family of seminorms
$$\mathbf{p}_{t}(\bm q)=|\bm q(t)|_{{\R^{8}}}, \ \ \ t\in\N,$$
where $\bm q(t)$ is the $t-$th component of $\bm q$.
Then,
consider
$$
\mathcal{Q}:=\bigg\{\bm q=(q_{R},q_{I},q_{S},q_{D}):\N\to [0,1]^{2}\times[0,1]^{2}\times[0,1]^{2}\times\{(0,0)\}  \bigg\}\subset \mathcal{V}.
$$
$\mathcal{Q}$ is convex  and, by Tikhonov's compactness Theorem, it is compact in $\mathcal{V}$. We consider the one-to-one correspondence
$
\mathcal{M}:	\mathcal{Q}\to \mathcal{A}
$
defined by
$$
(\mathcal{M} \bm q)(t,k)\equiv q_{k}(t), \ \ \ (t,k)\in \N\times \K.
$$
Let $\bm\mu^{\bm q}$ be the solution to \eqref{eq:evolu} associated to $\overline{\bm\vartheta}=\mathcal{M}(\bm  q)$ and let
$$F:\mathcal{Q}\to\mathcal{Q}, \ \ \ \ F(\bm q)(t,k):=(\hat{\vartheta}_p(t,k;\bm q),\hat{\vartheta}_c(t,k;\bm q)),  \ \ \ \   (t,k)\in \N\times \K.$$
where  $\hat{\bm\vartheta}(t,k;\bm q)= (\hat{\vartheta}_p(t,k;\bm q),\hat{\vartheta}_c(t,k;\bm q))$ is  the unique  the maximizer over $[0,1]^2$ of
$$ \bm\vartheta\mapsto \sum_{k'\in\K} p_{kk'}(t)\big[ u(t,c(t), k,\bm{\vartheta}(t))
+ (1-\rho)V(t+1,k',\bm\mu^{\bm q}(t+1),(\mathcal{M}\bm q)(t+1,k'))
\big].$$
Clearly, if ${\bm q^{*}}$ is a fixed point of $F$, then $(V(\cdot,\cdot, \bm{\mu}(0),\mathcal{M}({\bm q^{*}})), \mathcal{M}({\bm q^{*}}))$ is an equilibrium according to
Definition \ref{def:eq}.
Given  a sequence $(\bm q^{n})\subset \mathcal{Q}$  converging to $\bm q\in\mathcal{Q}$, we have $$V(t,k, {\bm{\mu}^{\bm q_n}}(t),\mathcal{M}(\bm q_{n})) \to V(t,k, {\bm{\mu}}^{\bm q}(t),\mathcal{M}(\bm q))$$ for each $t\geq 0$. Consequently, by strict concavity and regularity of $\bm\vartheta \mapsto u(t,c(t), k, \bm \vartheta)$,  we  also have the convergence
$\bm{\hat{\vartheta}}(t,k;\bm q_{n})\to \bm{\hat{\vartheta}}(t,k;\bm q).$
This shows that $F$ is continuous. We conclude by Theorem \ref{th:t}.
\end{proof}

\bigskip

\begin{proof}[Proof of Theorem \ref{thm:verif}]
	We show that (i) and (ii) of Definition \ref{def:eq} hold for the couple  $(\hat{v},\hat{\bm\vartheta})$, which is assumed to be well defined by induction (as $\hat\xi$ is so at each step).

	We preliminarily notice that, given $(t_{0},k(t_0))\in\N\times \{S,I,R\}$, the function $$[0,1]^2\to \R, \ \ \bm\vartheta= (\vartheta_p, \vartheta_c)  \mapsto u(t_{0},c(t_{0}),k(t_0),\bm\vartheta)$$ is strictly concave, since
	$$
	D_{\bm\vartheta}
	u(t_{0},c(t_{0}),k(t_0),\bm\vartheta)=
	\left(
	\frac{A_1^{k(t_0)} }{A_0^{k(t_0)} + A_1^{k(t_0)} \vartheta_p}
	- \gamma_p(t_{0}, k(t_0),\mu(t_{0})), \
	\frac{P_{1} }{P_{0} + P_{1} \vartheta_c}
	- \gamma_c(t_{0, }k(t_0),\mu(t_{0}))\right),
	$$
	and
	$$
	D^2_{\bm\vartheta}
	u(t_{0},c(t_{0}),k(t_0),\bm\vartheta)=
	\left(
	\begin{array}{cc}\displaystyle
	-\frac{(A_1^{k(t_0)})^2}{(A_0^{k(t_0)} + A_1^{k(t_0)} \vartheta_p)^2}
	& 0 \\
	0 & \displaystyle -\frac{P^2_{1} }{(P_{0} + P_{1} \vartheta_c)^2}
	\end{array}
	\right).
	$$
	Now we fix $t_0\in\N$ and show that $\hat{v}(t_0,\cdot)$ solves the dynamic programming equation on the various occurrences of $k(t_0)\in\K$ and that  $\hat{\bm\vartheta}(t_0,\cdot)$ defined as in the algorithm are  the maximizers of the right hand side of \eqref{DPE} .
	\begin{itemize}
		\item \textbf{Case $\bm{k(t_0)=D}$}. In this case the unique admissible control is ${\bm\vartheta}(t_{0},D):=(0,0)$ and the Bellman  equation reduces to
		\begin{align}\label{eq:DPER}
		&v(t_{0},D) =
		u(t_{0},0,D,(0,0))
		+ (1-\rho)v(t_{0}+1,D) = (1-\rho)v(t_{0}+1,D).
		\end{align}
		It is clear that the above constructed $\hat v$ is always zero on $D$ and hence satisfies the above equation. The maximizer ${\hat{\bm\vartheta}}(t_{0},D)$ is the unique admissible control, i.e.   $\hat{\bm\vartheta}(t_{0},D)=(0,0)$.
		
		\medskip
		
		\item \textbf{Case $\bm{k(t_0)=R}$}. In this case the Bellman equation reduces to
		\begin{align}\label{eq:DPERbis}
		v(t_{0},R)& = \sup_{\bm\vartheta\in [0,1]^2}
		\big( u(t_{0},c(t), R,\bm\vartheta)
		+ (1-\rho)v(t_{0}+1,R) \big)=  (1-\rho) v(t_{0}+1,R) +\sup_{\bm\vartheta\in [0,1]^2}
		u(t_{0},c(t), R,\bm\vartheta).
		\end{align}
		The optimization above leads to the unique maximum point
		$$\bm{\hat \vartheta}=(\hat{\vartheta}_{p},\hat{\vartheta}_{c})= \big((\tilde{\vartheta}_{p}\wedge 1)\vee 0,\ (\tilde{\vartheta}_{c}\wedge 1)\vee 0	\big),$$
		where
		$$
		\begin{cases}
		\displaystyle{\tilde{\vartheta}_p=
			\frac{A_{1}^R-\gamma_p(t_{0},I,\bm\mu(t_{0}))A_{0}^R}
			{\gamma_p(t_{0}, R,\bm\mu(t_{0}))A_{1}^R}
			=\frac{1}{\gamma_p(t_{0},R,\bm\mu(t_{0}))}-\frac{A_{0}^R}{A_{1}^R}},\\\\
		\displaystyle{\tilde{\vartheta}_c=
			\frac{P_{1}-\gamma_c(t_{0},R,\bm\mu(t_{0}))P_{0}}
			{\gamma_c(t_{0},R,\bm\mu(t_{0}))P_{1}}
			=\frac{1}{\gamma_c(t_{0},R,\bm\mu(t_{0}))}-\frac{P_{0}}{P_{1}}.}
		\end{cases}
		$$
		We therefore get
		%$$v^{\bm{\mu}(t_0)}(t_{0}+1,R)=\frac{v^{\bm{\mu}(t_0)}(t_{0}+1,R) -
		% u(t_{0},R,\hat\vartheta)}{1-\rho};$$
		% i.e.
		$$v(t_{0}+1,R) = \frac{v(t_{0},R)- u(t_{0},c(t_{0}),R,\bm{\hat\vartheta})}{1-\rho}.$$
		Hence $\hat v(t_{0},\cdot)$ defined as in \eqref{def:hatv} satisfies by construction the Bellman equation \eqref{DPE}  with
	 maximizer  $\hat{\bm\vartheta}(t_{0},R)$ given by \eqref{thetaRI}.
		
		\medskip
		
		\item \textbf{Case $\bm{k(t_0)=I}$}. In this case the dynamic programming equation reduces to
		\begin{align}\label{eq:DPEI}
		v(t_{0},I) &= \sup_{\vartheta\in [0,1]^2}
		\Big( u(t_{0},c(t_{0}),I,\bm\vartheta)
		+ (1-\rho)\left((1-\pi_R-\pi_D)v(t_{0}+1,I) +\pi_Rv(t_{0}+1,R)\right)\big)\nonumber\\
		& =  (1-\rho)\left((1-\pi_R-\pi_D)v(t_{0}+1,I) +\pi_Rv(t_{0}+1,R)\right) + \sup_{\vartheta\in [0,1]^2}
		u(t_{0},c(t_{0}),I,\bm\vartheta).
		\end{align}
		The optimization above leads to the unique maximum point
		$$(\hat{\vartheta}_{p},\hat{\vartheta}_{c})= \big((\tilde{\vartheta}_{p}\wedge 1)\vee 0,\ (\tilde{\vartheta}_{c}\wedge 1)\vee 0	\big),$$
		where
		$$
		\begin{cases}
		\displaystyle{\tilde{\vartheta}_p=
			\frac{A_{1}^I-\gamma_p(t_{0},I,\hat{\bm\mu}(t_{0}))A_{0}^I}
			{\gamma_p(t_{0},I,\hat{\bm\mu}(t_{0}))A_{1}^I}
			=\frac{1}{\gamma_p(t_{0},I,\hat{\bm\mu}(t_{0}))}-\frac{A_{0}^I}{A_{1}^I}},\\\\
		\displaystyle{\tilde{\vartheta}_c=
			\frac{P_{1}-\gamma_c(t_{0},I,\hat{\bm\mu}(t_{0}))P_{0}}
			{\gamma_c(t_{0},I,\hat{\bm\mu}(t_{0}))P_{1}}
			=\frac{1}{\gamma_c(t_{0},I,\hat{\bm\mu}(t_{0}))}-\frac{P_{0}}{P_{1}}.}
		\end{cases}
		$$
		We therefore get
		$$v(t_{0}+1,I)=\frac{1}{1-\pi_R-\pi_D}\left[\frac{v(t_{0},I)-
			u(t_{0},c(t_{0}),I,\bm{\hat\vartheta})}{1-\rho}- \pi_Rv(t_{0}+1,R) \right].$$
Hence $\hat v(t_{0},\cdot)$ defined as in \eqref{def:hatv} satisfies by construction the Bellman equation \eqref{DPE} with 	 maximizer  $\bm{\hat\vartheta}(t_{0},I)$ given by \eqref{thetaRI}. 		
		\medskip
		
		\item \textbf{Case $\bm{k(t_0)=S}$}. In this case the Bellman equation reduces to
		\begin{align}\label{eq:DPES}
		v(t_{0},S) &= \sup_{\bm\vartheta\in [0,1]^2}
		\Big( u(t_{0},c(t_{0}),S,\bm\vartheta)
		+ (1-\rho)\left( (1- \tau(t_{0})) v(t_{0}+1,S) +  \tau(t_{0}) v(t_{0}+1,I)\right)\big),
		\end{align}
		which can be rewritten as
		\begin{align}\label{eq:DPESbis}
		v(t_{0},S) =& \ \ (1-\rho) v(t_{0}+1,I)+(1-\rho) (v(t_{0}+1,S)-v(t_{0}+1,I)) \\&+
		\sup_{\bm\vartheta\in [0,1]^2}
		\Big( u(t_{0},c(t_{0}),S,\bm\vartheta)
		- (1-\rho)  \tau(t_{0})  (v(t_{0}+1,S)-v(t_{0}+1,I))\Big),
		\end{align}
	\end{itemize}
	Set $\xi:= v(t_{0}+1,S)-v(t_{0}+1,I)$ and consider the optimization above in terms of the parameter $\xi\in\R_{+}$.
	The maximization
	leads to the unique maximum point
	$$\bm{\hat{\vartheta}}^{\xi}=(\hat{\vartheta}_{p}^{\xi},\hat{\vartheta}_{c}^\xi)= \big((\tilde{\vartheta}_{p}^\xi\wedge 1)\vee 0,\ (\tilde{\vartheta}_{c}^\xi\wedge 1)\vee 0	\big),$$
	where
	\begin{eqnarray*}
		\tilde\vartheta_p^\xi=
		%			\frac{A_{1}(S)-[\gamma_p(S,\mu(t_{0}))+{(1-\rho)} (v^{\bm{\mu}(t_0)}(t_{0}+1,S)-v^{\mu_{0}}(t_{0}+1,I)) a(t_{0})]A_{0}(S)}
		%			{[\gamma_p(S,\mu(t_{0}))+{(1-\rho)} (v^{\bm{\mu}(t_0)}(t_{0}+1,S)-v^{\bm{\mu}(t_0)}(t_{0}+1,I)) a(t_{0})]A_{1}(S)}\\
		%			&
		%			=&
		\frac{1}{\gamma_p(t_{0},S,\hat{\bm\mu}(t_{0}))+{(1-\rho)}  \hat a(t_{0})\xi}
		-\frac{A_{0}^S}{A_{1}^S}, \ \ \ \ \ \ \ \
		\tilde\vartheta_c^\xi=
		%			\frac{P_{1}-[\gamma_c(S,\mu(t_{0}))+{(1-\rho)} (v^{\bm{\mu}(t_0)}(t_{0}+1,S)-v^{\bm{\mu}(t_0)}(t_{0}+1,I)) b(t_{0})]P_{0}}
		%			{[\gamma_c(S,\mu(t_{0}))+(1-\rho) (v^{\bm{\mu}(t_0)}(t_{0}+1,S)-v^{\bm{\mu}(t_0)}(t_{0}+1,I)) b(t_{0})]P_{1}}\\
		%			&=&
		\frac{1}{\gamma_c(t_{0},S,\hat{\bm\mu}(t_{0}))+{(1-\rho)}  \hat{b}(t_{0})\xi}
		-\frac{P_{0}}{P_{1}},
	\end{eqnarray*}
	where
	\begin{equation}\nonumber
	\hat a(t_{0})= \hat\mu(t_{0},I)\hat\vartheta_{p}(t_{0},I), \ \ \
	\hat b(t_{0})= \hat\mu(t_{0},I)\hat\vartheta_{c}(t_{0},I).
	\end{equation}
	Recalling the definition of $f$ given in \eqref{def:f},
	%		$$
	%		f(S,\nu,\xi)=  U(S,\hat Z(\hat \vartheta^{S}(\xi,\nu),\nu),\nu,\hat{\vartheta}^{S}(\xi,\nu))
	%- (1-\rho)\left(\beta_{P}a(\nu)\hat\vartheta^{S}_{p}(\xi,\nu) +\beta_{C}b(\nu)\hat\vartheta^{S}_{c}(\xi,\nu)\right) .
	%		$$
	%		where
	%		$$
	%		\hat{Z}(\vartheta^{S},\nu)= \phi\bigg( \nu(S)\vartheta^{S}, \,\,\nu(I)\hat\bar\vartheta_{p}(I,\nu), \,\, \nu(R)\hat\bar\vartheta_{p}(R,\nu)\bigg).
	%		$$
	the Bellman equation reduces to  the algebraic equation in the variable $\xi\in\R^{+}$
	$$
	v(t_{0},S) =(1-\rho) v(t_{0}+1,I)+(1-\rho) \xi + f(t,\xi).
	$$
	By assumption this equation has a unique solution $\hat \xi$.
 Hence $\hat v(t_0,\cdot)$ defined as in \eqref{def:hatv} satisfies by construction the Bellman equation \eqref{DPE} 	with  maximizer  $\bm{\hat\vartheta}(t_{0},S)$
given by \eqref{thetaS}. \end{proof}

\newpage

\section{Procedure of simulation \label{app:simulationProcedure}}

NOTE: In this appendix the notation is lightened from that used in the body of the article to avoid making the formulas too heavy thinking and difficult to read.

\smallskip

\begin{enumerate}[(A)]

\item \textbf{The maximum in the lifetime utilities.}
As $t$ goes to infinitum the number of infected agents converges to zero, i.e. $\lim_{t \to \infty}\mu\left(t,I\right)=0$ and $\lim_{t \to \infty}\mu\left(t,R\right)\gg 0$ and the lifetime utilities is maximum in this state of no pandemic. Then:
\begin{align} \nonumber
 U(S)^{\max} = \lim_{t \to \infty} U(t,S) & =  \lim_{t \to \infty} \frac{\kappa(t,\mu(t,I),SR) + \ln(1+{Z}(t)) }{{\rho}} = \\ \nonumber
=& \dfrac{\kappa(\infty,0,SR) + \ln\left(1+ 1/\gamma_p(\infty,0,SR) - A^{SR}_0/A^{SR}_1 \right)}{{\rho}}; \\ \nonumber
 U(R)^{\max} = \lim_{t \to \infty}  U(t,R) & = \lim_{t \to \infty} \frac{\kappa(t,\mu(t,I),SR) + \ln(1+{Z}(t)) }{{\rho}} = \\ \nonumber
=& \dfrac{\kappa(\infty,0,SR) + \ln\left(1+ 1/\gamma_p(\infty,0,SR) - A^{SR}_0/A^{SR}_1 \right)}{{\rho}}; \\ \nonumber
U(I)^{\max} = \lim_{t \to \infty} U(t,I) & = \dfrac{{\rho}\kappa(\infty,0,I) + {(1-\rho)} \pi_R \kappa(\infty,0,SR)}{{\rho}\left[1- {(1-\rho)}\left(1-\pi_R-\pi_D\right)\right]} + \\ \nonumber
+ & \dfrac{\left[1-{(1-\rho)}\left(1-\pi_R\right)\right]\ln\left(1+ 1/\gamma_p(\infty,0,SR) - A^{SR}_0/A^{SR}_1 \right)}{{\rho}\left[1- {(1-\rho)}\left(1-\pi_R-\pi_D\right)\right]}.
\end{align}
where:
\begin{equation} \nonumber
\kappa(t,\mu_I,SR) := \ln\left(\dfrac{A^{SR}_1}{\gamma_p(t,\mu_I,SR)}\right) + \gamma_p(t,\mu_I,SR)\dfrac{A^{SR}_0}{A^{SR}_1} + \ln\left(\dfrac{P_1}{\gamma_c(t,\mu_I,SR)}\right) +\gamma_c(t,\mu_I,SR)\dfrac{P_0}{P_1} - 2;
\end{equation}
and
\begin{equation} \nonumber
\kappa(t,\mu_I,I) := \ln\left(\dfrac{A^{I}_1}{\gamma_p(t,\mu_I,I)}\right) + \gamma_p(t,\mu_I,I)\dfrac{A^{I}_0}{A^{I}_1} + \ln\left(\dfrac{P_1}{\gamma_c(t,\mu_I,I)}\right) +\gamma_c(t,\mu_I,I)\dfrac{P_0}{P_1} - 2.
\end{equation}

\item \textbf{The feasible set of individual lifetime utilities.}
From Point (A), together with the appropriate choice of $M$ in order to make $U(t, k) \geq 0 \; \text{for } t\geq 0 \text{ and } \forall k \in \K$, the feasible set of individual lifetime utilities is defined as follows:
\begin{align}
T:= \left\{ \left(x,y,z\right) \in \left(0,U(R)^{\max}\right) \times \left(0,U(I)^{\max}\right) \times \left(0,U(R)^{\max}\right): y \leq x \leq z \right\}.
\label{eq:feasibleRangeEquilibriumTrajectories}
\end{align}
This gives a bound for the lifetime utilities in the spirit of Theorem \ref{thm:verif}.
\item \textbf{Set the health status distribution of population at time $0$ as:}
\begin{align} \nonumber
\mu(0,S) & = 1-\epsilon; \\ \nonumber
\mu(0,I) & = \epsilon; \\ \nonumber
\mu(0,R) & = 0; \\ \nonumber
\mu(0,D) & = 0,
\end{align}
with $\epsilon$ very small.

\item \textbf{Set the initial value of utilities in the three states in the feasible set $T$ by choosing $\delta^I, \delta^S, \delta^R \geq 0$.}

\begin{align} \nonumber
U(0,R) & = U(R)^{\max}(1-\delta^R); \\ \nonumber
U(0,S) & = U(0,R)(1-\delta^S); \\ \nonumber
U(0,I) & = U(0,S)(1-\delta^I)
\end{align}

\item \textbf{Calculate $a(0)$ and $b(0)$}:
\begin{align} \nonumber
a(0) & =\beta_p \times \mu(0,I) \times \overline\vartheta_p(0,\mu(0,I),I) \\ \nonumber
b(0) & =\beta_c \times \mu(0,I) \times \overline\vartheta_c(0,\mu(0,I),I),
\end{align}
where
\begin{align} \nonumber
\overline\vartheta_p(0,\mu(0,I),I) & = \dfrac{1}{\gamma_{p}(0,\mu(0,I),I)} - \dfrac{A^I_0}{A^I_1} \text{ and}\\ \nonumber
\overline\vartheta_c(0,\mu(0,I),I) & = \dfrac{1}{\gamma_{c}(0,\mu(0,I),I)} - \dfrac{P_0}{P_1}.
\end{align}

\item \textbf{Find $\Delta U(1,S,I) := U(1,S) - U(1,I)$ by solving the following implicit equation}
\begin{align} \nonumber
0 &= -{(1-\rho)} \left(1-\pi_R -\pi_D\right) \Delta U(1,S,I) + \left(1-\pi_R -\pi_D\right) U(0,S) - U(0,I) +\pi_R U(0,R) + \\ \nonumber
& -\pi_R \kappa(1,\mu(0,I),R)  + \kappa(1,\mu(0,I),I) - \left(1-\pi_R -\pi_D\right)\chi\left(\Delta U(1,S,I)\right) + \pi_D \ln \left(1 + Z\left(\Delta U(1,S,I) \right)\right),
\end{align}
where
\begin{align} \nonumber
\chi\left( \Delta U(1,S,I)\right) &:= \\ \nonumber
& \ln\left(\dfrac{A^{SR}_1}{\gamma_p(1,\mu(0,I),S) + {(1-\rho)} a(0) \Delta U(1,S,I)}\right) + \\ \nonumber &+\dfrac{A^{SR}_0}{A^{SR}_1}\left\{\gamma_p(1,\mu(0,I),S) + {(1-\rho)} a(0)  \Delta U(1,S,I)\right\} + \\ \nonumber
&+ \ln\left(\dfrac{P_1}{\gamma_c(1,\mu(0,I),S) + {(1-\rho)} b(0) \Delta U(1,S,I)}\right) + \\ \nonumber
&+ \dfrac{P_0}{P_1}\left\{\gamma_c(1,\mu(0,I),S) + {(1-\rho)} b(0)\Delta U(1,S,I)\right\} -2 \\ \nonumber
\end{align}
and
\begin{align} \nonumber
{Z}(0) = Z\left(\Delta U(1,S,I) \right) & =  \mu(0,S) \times \left[\frac{1}{\gamma_p(1,\mu(0,I),S)+{(1-\rho)} a(0) \Delta U(1,S,I)}-\frac{a^{SR}_{0}}{a^{SR}_{1}}\right] + \\ \nonumber
& + \mu(0,I) \times \overline\theta_p(0,\mu(0,I),I) + \mu(0,R) \times \overline\theta_p(0,\mu(0,I),R).
\end{align}
where
\begin{align} \nonumber
\overline\theta_p(0,\mu(0,I),R) & = \dfrac{1}{\gamma_{p}(0,\mu(0,I),R)} - \dfrac{a^{SR}_0}{a^{SR}_1} \text{ and}\\ \nonumber
\overline\theta_c(0,\mu(0,I),R) & = \dfrac{1}{\gamma_{c}(0,\mu(0,I),R)} - \dfrac{P_0}{P_1},
\end{align}
where we set $\mu(1,k) \approx \mu(0,k) \; \forall k \in \K$, to simplify the calculations. This approximation is more and more accurate as time scale of simulation is smaller, in the limit of continuos time is exact.

\item \textbf{Calculate the movement of susceptible}
\begin{align} \nonumber
\overline\vartheta_p(0,\mu(0,I),S) &=\frac{1}{\gamma_p(0,\mu(0,I),S)+{(1-\rho)} a(0) \Delta U(1,S,I)}-\frac{A^{SR}_{0}}{A^{SR}_{1}}; \\ \nonumber
\overline\vartheta_c(0,\mu(0,I),S) &=\frac{1}{\gamma_c(0,\mu(0,I),S)+{(1-\rho)} b(0) \Delta U(1,S,I)} -\frac{P_{0}}{P_{1}}.
\end{align}

\item \textbf{Calculate the level of lifetime utilities at time 1}
 \begin{align} \nonumber
 U(1,R) & = \dfrac{U(0,R) -\ln \left(1 + {Z}\left(0\right)\right) - \kappa(0,\mu(0,I),R)}{{1-\rho}}; \\ \nonumber
 U(1,I) & = \dfrac{U(0,I)-\pi_R U(0,R) +\pi_R \kappa(0,\mu(0,I),R) -\kappa(0,\mu(0,I),I) - \left(1-\pi_R\right)\ln \left(1 + {Z}\left(0\right)\right)}{{(1-\rho)}\left(1-\pi_R-\pi_D\right)} ;\\ \nonumber
 U(1,S) & = \Delta U(1,S,I) + U(1,I) ;
 \end{align}
 \item \textbf{Upgrade the health status distribution of population at time 1}
 \begin{align} \nonumber
 \mu(1,S) &= \mu(0,S) \left[1-a(0)\overline{\vartheta}_p(0,\mu(0,I),S)- b(0)\overline{\vartheta}_c(0,\mu(0,I),S)\right],\\ \nonumber
 \mu(1,I) &= \mu(0,S)\left[a(0)\overline{\vartheta}_p(0,\mu(0,I),S)+ b(0)\overline{\vartheta}_c(0,\mu(0,I),S)\right]+\mu(0,I)(1-\pi_R-\pi_D),\\ \nonumber
 \mu(1,R) &= \mu(0,I)\pi_R+\mu(0,R),\\ \nonumber
 \mu(1,D) &= \mu(0,D) + \mu(0,I)\pi_D.
 \end{align}

 \item \textbf{Check if Condition (\ref{eq:feasibleRangeEquilibriumTrajectories}) is satisfied. If not start with a new set of $\delta$s at point D. If Condition (\ref{eq:feasibleRangeEquilibriumTrajectories}) is satisfied and the number of periods is lower of a given threshold repeat points E-I by taking the new level of $\mu$s at point I.}

\end{enumerate}

\end{footnotesize}

\end{document}